\documentclass[conference]{IEEEtran}
\IEEEoverridecommandlockouts
\usepackage{cite}
\usepackage{amsmath,amssymb,amsfonts}
\usepackage{amsthm}
\usepackage{mathpartir}
\providecommand{\llbracket}{\mathopen{[\![}}
\providecommand{\rrbracket}{\mathclose{]\!]}}
\usepackage{algorithm}
\usepackage{algpseudocode}
\usepackage{graphicx}
\usepackage{booktabs}
\usepackage{multirow}
\usepackage{array}
\usepackage{xcolor}
\usepackage{listings}
\usepackage{scratch3}
\usepackage{url}
\usepackage[hidelinks]{hyperref}
\usepackage{microtype}
\usepackage{pgfplots}
\pgfplotsset{compat=1.17}
\usetikzlibrary{arrows.meta,positioning,shapes.geometric,shapes.symbols,calc}
\usepackage[most]{tcolorbox}
\usepackage{pifont}

\newcommand{\sys}{StackSwap}

\newcommand{\sched}{\ensuremath{\mathit{Sched}}}
\newcommand{\lensto}{\ensuremath{\sqsubseteq}}

\newcommand{\runin}[1]{\smallskip\noindent\textbf{\textit{#1}}\hspace{0.6em}}
\newcommand{\takeaway}[2]{\smallskip\noindent\textbf{#1}\hspace{0.5em}\textit{#2}\smallskip}
\newtcolorbox{problembox}[1]{colback=gray!6,colframe=black!70,coltitle=white,
  fonttitle=\bfseries\small,title={#1},boxrule=0.5pt,arc=1pt,left=4pt,right=4pt,top=3pt,bottom=3pt}

\newtheorem{theorem}{Theorem}
\newtheorem{definition}{Definition}
\newtheorem{lemma}{Lemma}
\newtheorem{corollary}{Corollary}

\newcommand{\figonescale}{0.62}\newcommand{\figoneleft}{0.36\linewidth}\newcommand{\figoneright}{0.60\linewidth}\newcommand{\figonesep}{\hfill}

\newcommand{\tblfont}{\footnotesize}\newcommand{\tblsep}{3pt}\newcommand{\tblLabelWide}{1.9in}

\newenvironment{figwide}{\begin{figure*}}{\end{figure*}}
\newcommand{\pipelinegraphic}{\includegraphics[width=\textwidth]{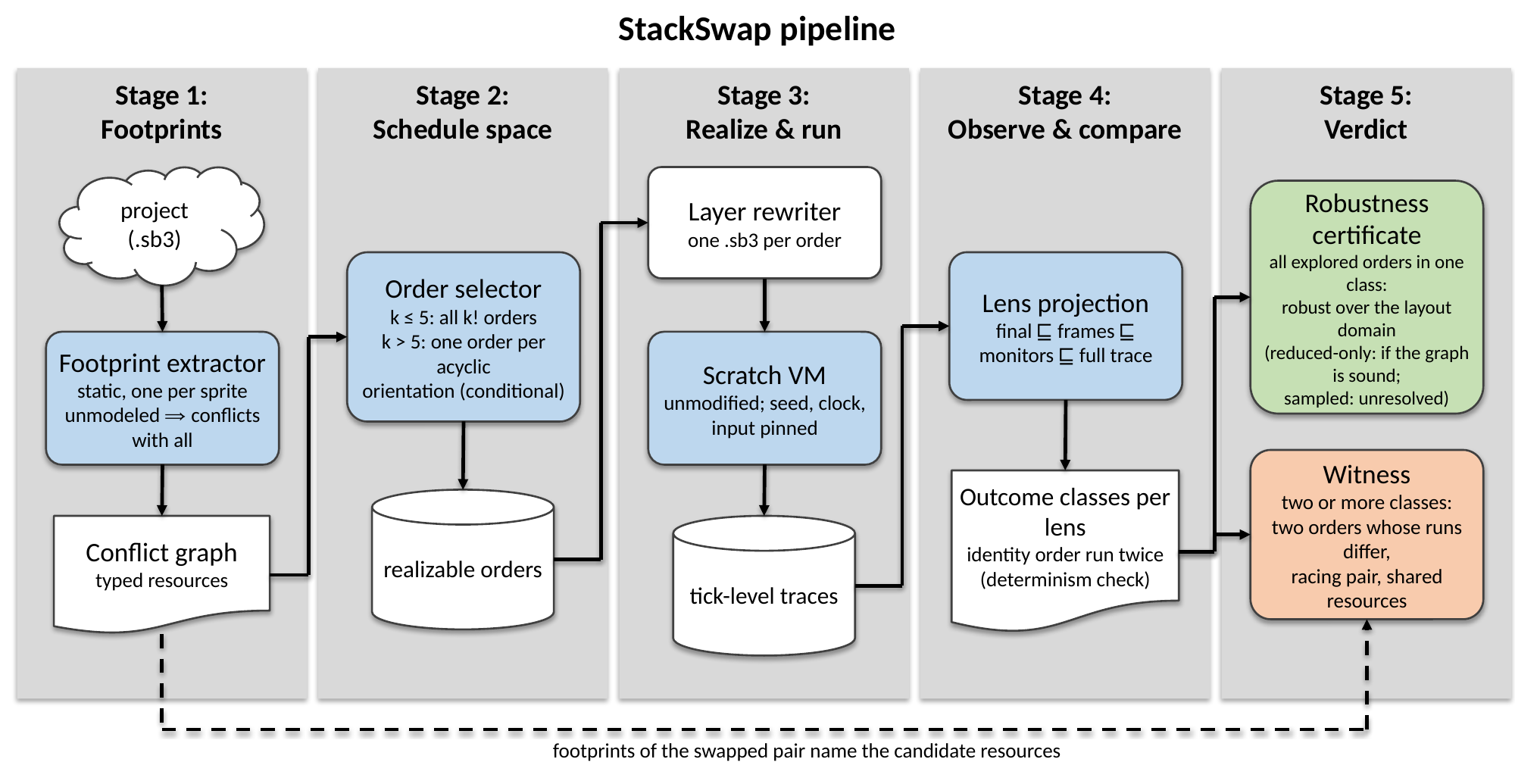}}

\newcommand{\ciCourseSens}{19.4--34.9}
\newcommand{\ciCourseSensNinety}{20.1--35.8}
\newcommand{\ciCourseSensThreeH}{26.8--43.5}
\newcommand{\ciJjcFlip}{0.7--3.9}
\newcommand{\ciPatRandAll}{37.1--62.9}
\newcommand{\ciPublicSens}{12.3--26.0}

\newcommand{\ciSfFlip}{0.0--2.9}
\newcommand{\nBenchPrograms}{46}
\newcommand{\nBinaryCourse}{10}
\newcommand{\nBinaryPublic}{9}
\newcommand{\nCourseAssignments}{13}

\newcommand{\nCourseConc}{121}

\newcommand{\nCourseExact}{109}
\newcommand{\nCourseExactRobust}{87}
\newcommand{\nCourseGraphComplete}{107}
\newcommand{\nCourseGraphMeasured}{121}

\newcommand{\nCourseRedBelow}{13}
\newcommand{\nCourseRedEnum}{101}
\newcommand{\nCourseReducedBelow}{13}

\newcommand{\nCourseReducible}{101}
\newcommand{\nCourseSampled}{12}

\newcommand{\nCourseSampledConcSens}{10}

\newcommand{\nCourseSeedAgree}{117}

\newcommand{\nCourseSens}{32}

\newcommand{\nCourseSensFull}{38}

\newcommand{\nCourseSensNinety}{33}
\newcommand{\nCourseSensSeedThree}{32}
\newcommand{\nCourseSensSeedTwo}{28}
\newcommand{\nCourseSensThreeH}{42}

\newcommand{\nCourseTotal}{145}

\newcommand{\nCourseZeroMiss}{101}

\newcommand{\nDetectCourseN}{22}
\newcommand{\nDetectCourseTwoSprite}{7}
\newcommand{\nDetectPublicN}{13}

\newcommand{\nHorizonCompared}{121}

\newcommand{\nHorizonLost}{0}
\newcommand{\nJjcCovFlip}{7}

\newcommand{\nJjcFlip}{5}

\newcommand{\nJjcKtwo}{299}
\newcommand{\nJjcKtwoUsers}{108}
\newcommand{\nJjcPrematureDone}{6}
\newcommand{\nJjcProblems}{10}
\newcommand{\nJjcProblemsUsed}{7}
\newcommand{\nJjcStudied}{372}
\newcommand{\nJjcStudiedUsers}{125}
\newcommand{\nJjcTotal}{521}
\newcommand{\nJjcUsers}{132}

\newcommand{\nLbNegCourse}{87}

\newcommand{\nLbNegPublic}{83}
\newcommand{\nLbPosCourse}{32}
\newcommand{\nLbPosPublic}{22}

\newcommand{\nLocPrevRows}{28}
\newcommand{\nLocPrevRuns}{172}
\newcommand{\nLocReqAllRows}{5}
\newcommand{\nLocReqDetEdge}{0}
\newcommand{\nLocReqRows}{2}
\newcommand{\nLocReqRuns}{14}
\newcommand{\nLocReqTruthEdge}{5}

\newcommand{\nPatBcastAll}{4}

\newcommand{\nPatCloneAll}{1}

\newcommand{\nPatRandAll}{27}
\newcommand{\nPatRandCourse}{20}
\newcommand{\nPatRandExhAll}{16}
\newcommand{\nPatRandNonexhAll}{11}

\newcommand{\nPatRandPure}{5}
\newcommand{\nPatRandWithCounter}{21}
\newcommand{\nPatRandWithOpaque}{1}
\newcommand{\nPatRandWithTimer}{3}
\newcommand{\nPatSenseAll}{8}

\newcommand{\nPatVarAll}{14}

\newcommand{\nPatVarWeakAll}{2}
\newcommand{\nProgramsStudied}{767}

\newcommand{\nPublicConc}{121}
\newcommand{\nPublicConcNoExtstrip}{109}
\newcommand{\nPublicExact}{97}

\newcommand{\nPublicExtstripConc}{12}
\newcommand{\nPublicExtstripSens}{1}
\newcommand{\nPublicGraphComplete}{43}
\newcommand{\nPublicGraphMeasured}{121}
\newcommand{\nPublicLoaded}{249}

\newcommand{\nPublicReducedBelow}{45}
\newcommand{\nPublicReducedOnlyRobust}{13}
\newcommand{\nPublicReducible}{81}
\newcommand{\nPublicSampled}{24}

\newcommand{\nPublicSampledSens}{8}
\newcommand{\nPublicSens}{22}
\newcommand{\nPublicSensFull}{28}
\newcommand{\nPublicSensNoExtstrip}{21}

\newcommand{\nPublicStubExtstrip}{20}

\newcommand{\nPublicTotal}{250}

\newcommand{\nPublicUntrusted}{1}
\newcommand{\nPublicZeroMiss}{81}

\newcommand{\nRecheckTables}{182}

\newcommand{\nReqCovFlipAll}{17}
\newcommand{\nReqFlipAll}{5}
\newcommand{\nReqFlipLogicAll}{3}
\newcommand{\nReqFlipOtherFail}{3}
\newcommand{\nReqFlipRandomAll}{2}
\newcommand{\nReqIdentFalse}{247}
\newcommand{\nReqIdentTrue}{1504}
\newcommand{\nReqIdentUnknown}{184}
\newcommand{\nReqKtwoAll}{427}
\newcommand{\nReqPredConfirmOnly}{7}
\newcommand{\nReqPredGated}{26}
\newcommand{\nReqPredUnconditional}{50}
\newcommand{\nReqPredicates}{83}
\newcommand{\nReqProblems}{19}

\newcommand{\nRunsTotal}{14,454}
\newcommand{\nSampledAll}{36}

\newcommand{\nSensExhAll}{35}

\newcommand{\nSensNonexhAll}{19}
\newcommand{\nSfCovFlip}{10}

\newcommand{\nSfFlip}{0}

\newcommand{\nSfKtwo}{128}

\newcommand{\nSilentLayerPublic}{7}

\newcommand{\pCourseGraphDensity}{1.00}

\newcommand{\pCourseSens}{26.4}

\newcommand{\pCourseSensThreeH}{34.7}

\newcommand{\pDetectThreeCourse}{97.7}

\newcommand{\pDetectTwoCourse}{91.4}
\newcommand{\pDetectTwoPublic}{79.0}
\newcommand{\pJjcFlip}{1.7}

\newcommand{\pPatLogicAll}{50.0}
\newcommand{\pPatRandAll}{50.0}

\newcommand{\pPublicGraphDensity}{0.70}
\newcommand{\pPublicSens}{18.2}

\newcommand{\pReqFlipAll}{1.2}
\newcommand{\pReqIdentTrue}{85.9}

\newcommand{\pSfFlip}{0.0}
\newcommand{\rJjcFlipGivenFlag}{18}

\newcommand{\rLbPrecAnyCourse}{26}

\newcommand{\rLbPrecAnyExhPublic}{30}
\newcommand{\rLbPrecAnyPublic}{26}
\newcommand{\rLbPrecConcCourse}{26}
\newcommand{\rLbPrecConcPublic}{29}
\newcommand{\rLbRecAnyCourse}{97}
\newcommand{\rLbRecAnyPublic}{95}
\newcommand{\rLbRecConcCourse}{91}
\newcommand{\rLbRecConcPublic}{86}
\newcommand{\rSfFlipGivenFlag}{0}

\newcommand{\xCostCourseMaxS}{81}
\newcommand{\xCostCourseMedianS}{1.2}

\newcommand{\xCostCourseThreeHMedianS}{2.6}

\newcommand{\xCourseRedMedian}{1.0}
\newcommand{\xMedMinClassCourse}{17}

\newcommand{\xMedOutcomesCourse}{5.5}
\newcommand{\xMedOutcomesPublic}{2}
\newcommand{\xMedSavedClassCourse}{17}
\newcommand{\xMedSavedClassPublic}{50}
\newcommand{\xPublicRedMedian}{1.5}

\newcommand{\nLaneCourseNewSens}{32}

\newcommand{\nLanePublicNewSens}{22}

\newcommand{\nLaneCourseNinetyCompared}{145}

\newcommand{\nLaneCourseNinetyChanged}{1}

\newcommand{\nLaneCourseThreeHCompared}{145}

\newcommand{\nLaneCourseThreeHChanged}{1}

\newcommand{\nLaneSfBoth}{128}
\newcommand{\nLaneSfFlipAgree}{128}

\newcommand{\nLaneSfFlipChanged}{0}

\newcommand{\nLaneJjcBoth}{299}
\newcommand{\nLaneJjcFlipAgree}{297}

\newcommand{\nLaneJjcNewFlip}{3}

\newcommand{\nLaneJjcFlipChanged}{2}

\newcommand{\nLaneAllCompared}{393}

\newcommand{\nLaneAllFullCompared}{241}
\newcommand{\nLaneAllFullChanged}{1}

\newcommand{\nLaneReqFlipLost}{2}

\begin{document}

\title{The Invisible Scheduler: Dragging a Sprite\\ Can Change What a Scratch Program Does}

\author{%
\IEEEauthorblockN{Xinyue Feng}
\IEEEauthorblockA{Independent Researcher\\China\\lily.china@outlook.com}
\and
\IEEEauthorblockN{Hanyuan Shi}
\IEEEauthorblockA{Independent Researcher\\China\\shihanyuan1995@gmail.com}
\and
\IEEEauthorblockN{Yuan Si\textsuperscript{*}}
\IEEEauthorblockA{University of Waterloo\\Canada\\y3si@uwaterloo.ca\\\textsuperscript{*}Corresponding author}}

\maketitle

\begin{abstract}
Tens of millions of children program in Scratch, whose programs are concurrent: when the green flag is clicked, every sprite's scripts start together and share the project's state. Which script starts first is decided by the sprites' front-to-back stacking order, which no block reads. Dragging a sprite brings it to the front, and the order is saved with the project. Hence, a program can work on the author's screen and fail on the teacher's, with the same blocks.

Our key observation is that, with the rest of the file fixed, rearranging the sprites that start scripts among their positions produces exactly the permutations of their initial stacking order. Under a fixed input and seed the unmodified virtual machine reproduces every one of them. \sys{} runs a project under these orders, all of them for up to five sprites, one per class of a conflict graph beyond, and a sample where neither is feasible, and compares the runs under four observation lenses. It returns a two-run witness naming an adjacent pair of sprites whose swap changes the outcome, with the resources they share as candidate causes; otherwise an exhaustive or graph-conditional robustness certificate, or an unresolved verdict.

On \nProgramsStudied{} real programs from a course, an online judge, and a random public sample, \pCourseSens\% (course) and \pPublicSens\% (public) of those with a scheduling choice behave differently under some stacking order. Among the sensitive programs whose every order ran, the saved order's outcome recurs under a median \xMedSavedClassCourse\% (course) and \xMedSavedClassPublic\% (public) of the orders. Under scripted play, a predicate written from a stated requirement holds under one order and fails under another for \nReqFlipAll{} of \nReqKtwoAll{} student submissions with a scheduling choice. For half of the sensitive programs the random stream is the racing pair's highest-priority shared resource. We close with recommendations for learners, graders, and the Scratch platform.

\end{abstract}

\section{Introduction}
\label{sec:intro}

Tens of millions of children program in Scratch~\cite{resnick2009scratch,scratchstats}, and for many of them it is the first programming language they meet. The blocks are designed to show everything the computer will do. This paper is about a coordinate the blocks do not show, and about what happens when nobody tells the learner that it exists.

Under the surface a Scratch project is a concurrent program~\cite{maloney2010scratch}. Every sprite carries scripts that start together when the green flag is clicked, and all of them read and write shared state: the project's variables and lists, and the position, costume, and visibility of every sprite. The learner writes interacting threads without ever naming one, and the order in which those threads start is written nowhere in the blocks. It is decided by the front-to-back stacking order of the sprites on the stage (Section~\ref{sec:background}): the virtual machine starts the front-most sprite's scripts first and works its way to the back. The stacking order changes whenever a sprite is dragged, because the editor brings a dragged sprite to the front before it moves it. Whichever sprite the learner last nudged into place runs first, and that order is saved with the project.

Figure~\ref{fig:robot} shows what this looks like in a real program from our study. A student builds a game in which a robot hunts viruses. One sprite's script resets the shared score and spawns the viruses; the robot's script moves under the arrow keys and ends the game once the score reaches four. The file stores the score at four, left there by the last run. In the saved stacking order the virus sprite is in front, so its reset runs first and the game plays normally. Bring the robot forward while tidying the scene, and the robot's script runs first: it reads the stale four before the reset, declares the game won, and stops everything. A teacher who opens a copy saved after such a drag sees a game that ends before it begins; the student sees a game that works. The two saved variants have the same blocks and different stacking orders, and nothing in the language marks the difference.

Three properties make this mechanism more than a curiosity. It is \textbf{invisible}: no block reads the stacking order as a schedule, the editor gives no sign that layers affect execution, and Scratch's documentation describes them as a drawing order. The community wiki records the effect in one sentence~\cite{scratchwikilayer}, and forum threads rediscover it~\cite{scratchforumorder}. It is \textbf{deterministic}: the same file starts its scripts in the same order on every machine and every run, so the start order is stable, reproducible, and fixed at the moment of saving, a coin flipped once and then kept. And it is \textbf{ordinary}: it takes no code to trigger, for example a drag, an added sprite, or a run of a program that itself reorders its sprites. A teacher or an automated grader that meets it may judge a requirement under an order the author never tested.

\begin{figure}[t]
\centering
\begin{minipage}[b]{\figoneleft}
\centering
\begin{scratch}[scale=\figonescale]
\blockinit{when \greenflag clicked}
\blockvariable{set \selectmenu{score} to \ovalnum{0}}
\blocklook{hide}
\blockcontrol{wait \ovalnum{2} seconds}
\blockrepeat{repeat \ovalnum{4}}{
\blockmove{go to \selectmenu{random position}}
\blockcontrol{create clone of \selectmenu{myself}}}
\end{scratch}\par\smallskip
{\small (a) Ghost (the virus)}
\end{minipage}\figonesep%
\begin{minipage}[b]{\figoneright}
\centering
\begin{scratch}[scale=\figonescale]
\blockinit{when \greenflag clicked}
\blockmove{go to x: \ovalnum{0} y: \ovalnum{0}}
\blockinfloop{forever}{
\blockmove{point in direction \ovalnum{90}}
\blockif{if \boolsensing{key \selectmenu{right arrow} pressed?} then}{\blockmove{change x by \ovalnum{10}}}
\blockif{if \booloperator{\ovalvariable{score} = \ovalnum{4}} then}{
\blocklook{say \ovalnum{all cleared} for \ovalnum{2} seconds}
\blockcontrol{stop \selectmenu{all}}}}
\end{scratch}\par\smallskip
{\small (b) Robot}
\end{minipage}
\caption{A real student submission from our study. (a) The virus sprite's script resets the shared \textsf{score} and spawns four clones. (b) The robot's script moves under the arrow keys and stops every script once \textsf{score} reaches four (three key handlers elided). Which of the two runs first is decided by which sprite is in front.}
\label{fig:robot}
\end{figure}

Unfortunately, existing analyses of Scratch programs look elsewhere: test generators such as Whisker~\cite{stahlbauer2019whisker} drive a project with synthesized inputs, static checkers such as LitterBox~\cite{fraser2021litterbox} flag bug patterns in the block graph, and neither varies the schedule. Classical concurrency testing does vary the schedule~\cite{flanagan2005dpor,abdulla2014optimal}. However, it explores the interleavings of a thread set, a model far larger than the one saved coordinate this paper varies.

We take the stacking order itself as the object of study and ask (i) how common it is for a real program's behavior to depend on it, (ii) how severe the dependence is, a rare corner or a coin flip, (iii) whether it changes whether a program does what its assignment asked, and (iv) what, in the code, causes it. Answering these questions requires exploring the schedules a user can realize by rearranging the sprites that start scripts, and our key observation makes that exploration finite: \emph{with the rest of the file fixed, rearranging the sprites that start scripts among their positions produces exactly the permutations of their initial stacking order, and the unmodified VM reproduces every one of them}. \sys{} explores that space: all orders for up to five sprites, one order per class of a dependence-based reduction beyond, and a sample when neither is feasible. The reduction's certificate is conditional on the soundness of its conflict graph. It compares the runs under four observation lenses, from the final state to the full trace. It returns a robustness certificate, a two-run witness that names an adjacent pair of sprites whose swap changes the outcome, with the resources their footprints share as candidate causes, or, when a sample finds no witness, an unresolved verdict. Every witness is a layout a user can have on screen.

We ran \sys{} on \nProgramsStudied{} real programs: a course assignment set, a uniformly random sample of public projects rebuilt with their original assets, and student submissions to an online judge. About one program in four with a scheduling choice in the course set, and one in five in the public sample, is schedule-sensitive at the final-state lens (\pCourseSens\% of \nCourseConc{} and \pPublicSens\% of \nPublicConc{}). Among the sensitive programs whose every order ran, the saved order's behavior recurs under a median \xMedSavedClassCourse\% (course) and \xMedSavedClassPublic\% (public) of the orders, so the dependence is a coin flip or worse rather than a corner case. Encoding each assignment's stated requirements as predicates over the recorded run, we find that for \nReqFlipAll{} of \nReqKtwoAll{} student submissions with a scheduling choice, a predicate holds under one stacking order and fails under another, and for \nReqCovFlipAll{} more the order changes whether a predicate returns a determinate value. A resource-typed attribution labels the certified racing pair of every sensitive program in the prevalence study by the highest-priority resource it shares, and for half of them that resource is the random stream.

This paper makes the following contributions.
\begin{itemize}
\item We identify the mechanism by which sprite stacking order schedules Scratch programs. We prove that the realizable schedule space is the finite set of permutations of the initial stacking order, every element of which the unmodified VM reproduces (Sections~\ref{sec:background} and~\ref{sec:model}).
\item We give a cost theory for checking the property. For a sound conflict graph the observation factors through the graph's acyclic orientations, so one representative run per orientation, $(-1)^{k}\chi_G(-1)$ of them by Stanley's identity, enumerates every outcome. For every graph we construct a project on which no explorer that must return an observed witness for every outcome can use fewer runs (Section~\ref{sec:tool}).
\item We present \sys{}, which explores that space exhaustively, by a graph-conditional reduction, or by sampling, and localizes a sensitivity to an adjacent pair of sprites whose swap changes the outcome (Section~\ref{sec:tool}).
\item We report the first measurement of the phenomenon on real programs: how common it is, how severe, whether it changes the satisfaction of assignment requirements, and what causes it. Random re-execution and a static bug checker serve as points of comparison (Sections~\ref{sec:study} and~\ref{sec:results}).
\item We draw concrete recommendations for learners, for teachers and automated graders, and for the Scratch platform (Section~\ref{sec:implications}).
\end{itemize}

\section{Background and Motivating Example}
\label{sec:background}

\subsection{How Scratch runs a project}

A Scratch project~\cite{maloney2010scratch,resnick2009scratch} is a set of targets: one stage and a number of sprites. Each target owns scripts, and each script begins with a hat block naming the event that starts it: the green flag, the receipt of a broadcast, the creation of a clone, a key press, or a click. A target also owns mutable state that other scripts can observe: local and global variables and lists, and the sprite's position, direction, costume, size, visibility, and graphic effects. Scripts that run at the same time share this state, so a project is a concurrent program over shared memory even though its author only ever manipulates blocks.

The runtime advances in ticks, thirty per second in the editor's default setting. Within a tick the virtual machine (VM) sweeps the active scripts, stepping each until it yields, and repeats the sweep until every script has yielded, a script has requested a redraw, or the frame budget ends. A script yields (i) at a loop boundary, (ii) at a timed wait, (iii) at the join of a broadcast-and-wait, or (iv) at a screen refresh; a loop whose body redraws the stage runs one iteration per tick. Execution is cooperative: a script holds the machine until it chooses to yield. The consequence that matters here is that the first slice of every green-flag script, everything before its first yield, runs in one sweep, one sprite after another, in a fixed order. That sweep decides the introduction's game: the virus script's first slice resets the score before yielding at its wait, the robot script's first slice runs the first iteration of its loop including the score test, and whichever sprite is in front runs first. With the robot in front the test reads the stored four, and the announcement that stops everything begins before the reset.

\subsection{The invisible scheduler: who goes first}

The VM keeps its targets in an ordered list, the stage first and then the sprites from back to front. This list is the sprites' stacking order, the same order that decides which sprite is drawn on top of which. When the green flag fires, the VM walks the list from its end to its beginning and starts each target's green-flag scripts as it passes~\cite{scratchvm}. The front-most sprite therefore runs first, the sprite behind it second, and so on to the back. The same walk starts (i) the receivers of a broadcast, (ii) the scripts of a pressed key, and (iii) the scripts of a click. Hence, the stacking order is a scheduling decision the language leaves to a list the learner never sees as a schedule (the \emph{invisible} property of Section~\ref{sec:intro}).

Three \emph{ordinary} actions rewrite that list, none of which touches a block:

\begin{itemize}
\item \emph{Dragging a sprite on the stage.} The editor brings a dragged sprite to the front before it moves it (``Dragging always brings the target to the front''~\cite{scratchgui}). For example, nudging a character a few pixels to tidy the scene makes it the first to run.
\item \emph{Adding or duplicating a sprite.} A new sprite is placed on top of the existing ones.
\item \emph{Running the project.} \textsf{go to front layer} and \textsf{go forward 1 layers} reorder the list while the project runs, and the reordered list stays in effect afterwards; clicking the green flag does not reset it.
\end{itemize}

When the project is saved, the current list is written into the file as each sprite's layer number. The order a classmate, a teacher, or an automated grader starts with is therefore the order in effect at the moment of saving. Unfortunately, that need not be the order under which the author last tested the program. Run the project, see it work, drag a sprite to tidy the layout, and save: the tested order and the saved order then differ, and only the saved order leaves the author's computer.

\section{Realizable Schedules and Robustness}
\label{sec:model}

In this section we fix what a schedule is, which schedules a user can realize, and what it means for a program to be robust against them. We write $P$ for a project with its stacking order factored out, $\iota$ for a fixed input policy, and $r$ for a fixed random seed. The stacking order re-enters as the order parameter below, so a statement about the same $P$ under two orders speaks about two saved copies of one program.

\subsection{A core calculus}
\label{sec:calculus}

Figure~\ref{fig:semantics} gives the fragment of the machine that decides order. A sprite owns scripts; a script is a hat and a body; a body is a finite sequence of \emph{slices}, the work between two yield points. A slice is an action $a$ with a store transformer $\llbracket a \rrbracket$ and a footprint $\mathrm{fp}(a)$, which records the resources it reads and writes and the events it raises. The graph theorems use only footprints; the commutation theorem of Section~\ref{sec:tool} additionally assumes the adequacy conditions stated there, and the extractor is built to over-approximate accesses.

A configuration carries the layout $L$, a thread queue split into the threads already stepped in this pass ($D$) and those still pending ($T$), the store, and the index $n$ of the current pass. \textsc{Step} runs the first pending thread for one slice. If the slice raises events, they are delivered in the order the slice raises them: a broadcast's receivers are matched against the targets \emph{in layout order} and appended to the tail of the pending list, so a thread started during a pass can still run in that pass, and a clone is placed directly behind its creator in the layout with its start threads, one per clone-start script, appended to the same tail. For example, two broadcasts in one slice queue the receivers of the first before those of the second, as the VM does. This delivery is the one place the layout enters the queue; the observation function may read it as well. A thread that yielded to the frame is carried in $D$ and skipped by \textsc{Pass} until the next \textsc{Tick}; we leave that flag out of the rules, since skipping never moves a slot. When no thread is pending, either the pass repeats (\textsc{Pass}) or the frame closes (\textsc{Tick}): the clock advances, the observation $\mathrm{obs}_\ell(\sigma, L)$ at the lens $\ell$ is emitted as the label of the step, and the pass index returns to one. A frame closes when every thread has yielded to it, when a slice has requested a redraw, or when the pass budget $B$ is spent; under the pinned clock of Section~\ref{sec:tool} that budget is a constant. Like the frame-yield flag, the redraw flag moves no slot and is left out of the configuration. A hat that fires again for a script whose thread is still alive either restarts it in place or leaves it running, according to the hat; neither moves another thread.

\begin{figwide}[t]
\small
\[
\begin{array}{r@{~}c@{~}l@{\qquad}r@{~}c@{~}l}
r &\in& \mathit{Res} \mathrel{::=} \mathsf{var}\,v \mid \mathsf{pose}\,s \mid \mathsf{clones}\,s \mid \mathsf{chan}\,m \mid \mathsf{rand} \mid \mathsf{timer}
&
a &\in& \mathit{Act},\quad \mathrm{fp}(a) = (R, W, Z) \\[1pt]
b &\mathrel{::=}& \epsilon \mid a \cdot b
&
\theta &\mathrel{::=}& \langle s, b \rangle \\[1pt]
h &\mathrel{::=}& \mathsf{flag} \mid \mathsf{key}\,\kappa \mid \mathsf{click} \mid \mathsf{recv}\,m \mid \mathsf{asclone}
&
C &\mathrel{::=}& \langle L,\ D \parallel T,\ \sigma,\ n \rangle
\end{array}
\]
\begin{mathpar}
\inferrule*[left=\textsc{Step}]
  {\mathrm{fp}(a) = (R,W,Z) \\ \sigma' = \llbracket a \rrbracket \sigma \\ (L', T') = \mathrm{deliver}(L,\ T,\ \mathit{ev}(a,\sigma))}
  {\langle L,\ D \parallel \langle s, a\cdot b\rangle \cdot T,\ \sigma,\ n \rangle
   \;\to\;
   \langle L',\ D \cdot \langle s, b\rangle \parallel T',\ \sigma',\ n \rangle}
\and
\inferrule*[left=\textsc{Done}]
  { }
  {\langle L,\ D \parallel \langle s, \epsilon\rangle \cdot T,\ \sigma,\ n \rangle
   \;\to\;
   \langle L,\ D \parallel T,\ \sigma,\ n \rangle}
\and
\inferrule*[left=\textsc{Pass}]
  {n < B \\ \text{no redraw was requested} \\ \text{some thread in } D \text{ yielded to the pass}}
  {\langle L,\ D \parallel \epsilon,\ \sigma,\ n \rangle
   \;\to\;
   \langle L,\ \epsilon \parallel D,\ \sigma,\ n+1 \rangle}
\and
\inferrule*[left=\textsc{Tick}]
  {n = B \;\lor\; \text{a redraw was requested} \;\lor\; \text{every thread in } D \text{ yielded to the frame} \\ o = \mathrm{obs}_\ell(\sigma, L)}
  {\langle L,\ D \parallel \epsilon,\ \sigma,\ n \rangle
   \;\overset{o}{\Longrightarrow}\;
   \langle L,\ \epsilon \parallel D,\ \sigma,\ 1 \rangle}
\end{mathpar}
\[
\begin{array}{@{}l@{}}
\mathrm{deliver}(L, T, \epsilon) = (L, T) \qquad
\mathrm{deliver}(L, T, \mathsf{send}\,m \cdot E) = \mathrm{deliver}\bigl(L,\ T \cdot \mathit{recv}_L(m),\ E\bigr) \\[1pt]
\mathrm{deliver}(L, T, \mathsf{clone}\,s\,\mathsf{as}\,c \cdot E) = \mathrm{deliver}\bigl(L[\,c \text{ directly behind } s\,],\ T \cdot \mathit{start}(c),\ E\bigr)
\end{array}
\]
\caption{The core calculus. A slice is an action whose footprint records the resources it reads and writes and the events it may raise; it ends at a yield, to the pass or to the frame, and \textsc{Tick} emits the observation as its label. The layout enters the queue in one place, the delivery inside \textsc{Step}, which delivers the events $\mathit{ev}(a,\sigma)$ the slice actually raises, each in $Z$, in the order it raises them: a broadcast appends its receivers $\mathit{recv}_L(m)$, one slot per matching script of each target in layout order, and a clone joins the layout directly behind its creator and its start threads $\mathit{start}(c)$, one per clone-start script in saved order, join the tail. New threads join the tail of the pending list, where they may run in the pass that started them; the pass index $n$ runs from $1$ to the budget $B$. The fragment fixes the queue spine of the VM and leaves out restarts in place, waits, and cancellation, which move no thread.}
\label{fig:semantics}
\end{figwide}

\subsection{The realizable schedule space}

With the seed and the clock pinned, $\llbracket a \rrbracket$ is a function, and the rules are deterministic once the layout is fixed (the \emph{deterministic} property). A run over a horizon of $H$ ticks is therefore a function of the program, the input, the seed, and the initial layout.

\runin{The fixed experiment} Fix the harness environment (the pinned clock, seed, and input, the sensing configuration, and the observation function), a horizon $H$, and a lens $\ell$. Let $\pi$ range over the permutations of the $k$ \emph{contributing} sprites, the sprites whose hats fire under the fixed input or that are reached from one through a broadcast or clone edge. Section~\ref{sec:tool} computes a static superset of this set that does not depend on the layout. The stage and every other sprite are held in their saved places. Write $\mathrm{exec}_H(P,\iota,r,\pi)$ for the run under initial layout $\pi$ and $F(\pi)$ for its observation at $\ell$; the layouts $S_k$ are the parameter domain, the runs are its image, and the outcome partition $\ker F$ is a further quotient. Everything the paper measures is a statement about $F$.

\begin{definition}[Admissible layouts]
Fix the $k$ contributing sprites and the $k$ layer positions they occupy in the saved file. An admissible layout permutes these sprites among those positions; every other target, the script order inside each sprite, the assets, and the saved state stay fixed.
\end{definition}

In the layer coordinate, then, the one free choice is the initial layout. One might ask why a sprite that never starts a script (a \emph{silent} sprite) is held in place rather than permuted. It changes no delivery. However, it can still matter: it can occlude a colour test in the pixel configuration, and a contributing sprite that moves by a relative number of layers counts it. We hold such sprites in their saved place, so the domain is a restriction of what a user can arrange. \nSilentLayerPublic{} public projects and no course program have a contributing sprite with a relative layer block beside a silent sprite, all of them in the sampled stratum (Section~\ref{sec:results}). Adjacency below refers to the order of the contributing sprites; a silent sprite stacked between two of them does not separate them. The realizable schedule space is
\[
  \sched(P,\iota,r,H) \;=\; \{\, \mathrm{exec}_H(P,\iota,r,\pi) \;:\; \pi \in S_k \,\}.
\]
The space is finite. Every element is a run of the unmodified VM package in the fixed environment. Lemma~\ref{lem:layer} establishes the queue invariant everything below rests on: surviving slots keep their insertion order, and each delivery appends its slots in layout order.

\begin{lemma}[Queue-slot discipline]
\label{lem:layer}
Give every thread the queue slot it was appended into; a hat that restarts a live script replaces the slot's payload and keeps the slot, and a hat that leaves a live script running changes nothing. Surviving slots stay in insertion order. If one delivery appends slots for distinct targets $t \neq t'$, then while both survive the slot of $t$ precedes the slot of $t'$ exactly when $t$ preceded $t'$ in the layout at that delivery. Slots from different deliveries are ordered by delivery, which the run itself fixes.
\end{lemma}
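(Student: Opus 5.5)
The plan is an induction on the length of a run in the calculus of Figure~\ref{fig:semantics}. The invariant is stated over the concatenated queue $D \cdot T$, read as a single sequence of slots together with a pointer marking the boundary between stepped and pending threads. The invariant has three parts. \emph{(I1)} The slots present in $D \cdot T$ appear in increasing insertion time. \emph{(I2)} For any two surviving slots created by the same delivery for distinct targets $t \neq t'$, their relative order equals the relative order of $t$ and $t'$ in the layout at that delivery. \emph{(I3)} Slots from different deliveries are ordered by the time of those deliveries. Part (I3) is the special case of (I1) in which insertion time is measured by delivery. Since the run is deterministic once the layout is fixed (the \emph{deterministic} property of Section~\ref{sec:model}), the delivery order is a function of the run.

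The base case is the initial configuration. Its queue holds the green-flag threads, which are produced by one walk over the layout. I will treat that walk as a single delivery, $\mathrm{deliver}$ applied to the flag event, so (I1)--(I3) hold exactly as they do for any later delivery.

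The inductive step is a case split over the four rules.
\begin{itemize}
\item \textsc{Step} moves the head of $T$ to the tail of $D$. In the concatenated sequence $D \cdot T$ this is a pointer shift, so no slot changes its position.
\item \textsc{Step} then calls $\mathrm{deliver}$, which only \emph{appends*} to the tail of $T$. By its definition, a broadcast appends $\mathit{recv}_L(m)$, whose slots are listed in layout order. A clone event appends $\mathit{start}(c)$, which consists of slots for a single target and so imposes no cross-target constraint. Events within one slice are delivered sequentially, so each appended block falls after every earlier slot. This gives (I1) and (I3), and (I2) holds for the new block by construction.
\item \textsc{Done} deletes a slot. Deleting an element from a sequence preserves the relative order of the remaining elements.
\item \textsc{Pass} and \textsc{Tick} rewrite $D \parallel \epsilon$ as $\epsilon \parallel D$. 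This is again a pointer move that leaves the sequence unchanged.
\item Hat restarts replace the payload of a slot and leave the slot in place. Leave-running hats do nothing. Frame-yield and redraw flags, which the figure omits, move no slot. These are exactly the stipulations of the lemma and of Section~\ref{sec:calculus}.
\end{itemize}

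One point needs care. (I2) refers to the layout \emph{at the time of that delivery*}, not the current layout. Later clone insertions and later layer blocks can change $L$, but they never reorder slots that already exist, so the order that was fixed at delivery is preserved.

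I expect the main obstacle to be the parts of the real VM that the figure leaves out: restarts, waits, cancellation, and the frame-yield skip. For these I will argue that each one either removes a slot or changes only the payload or flags of a slot, and never reinserts a slot. Reinsertion is the only operation that could break (I1). The calculus asserts this property, and I will invoke that assertion rather than re-derive it from the VM. A second subtlety is a restart that fires while the restarted thread is still alive. The slot must keep its original insertion time in that case, and the lemma's convention of replacing the payload in place gives exactly this.
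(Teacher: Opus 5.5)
Your proof is correct and is essentially the paper's argument: a delivery appends its slots at the tail in layout order, \textsc{Step}, \textsc{Pass}, and \textsc{Tick} only move the boundary inside $D \cdot T$, \textsc{Done} deletes a slot, and a restart replaces a payload in place, so insertion order persists for surviving slots, and with it the layout order fixed at each delivery. Your explicit induction adds only bookkeeping, and determinism is not needed for the run to fix the delivery order; you could also note that input-triggered starts at fixed ticks are further tail-appending deliveries, of the same kind as your base-case green-flag walk.
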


\begin{proof}
A delivery appends one slot per matching hat, walking the targets in layout order and the scripts of each target in saved order, so the claim holds when the slots are appended. Every rule preserves the order of surviving slots in $D \cdot T$. \textsc{Step} moves the head of $T$ to the end of $D$ and appends fresh slots after everything present, one delivery after another in the order the slice raised them; \textsc{Done} deletes a slot; \textsc{Pass} and \textsc{Tick} rotate $D$ back in its existing order. Replacement changes a payload, not a position, and a clone takes part in later deliveries from its place in $L$, directly behind its creator.
\end{proof}

\begin{theorem}[Realizability]
\label{thm:realizable}
Every $\pi \in S_k$ has a saved encoding among the admissible layouts, and every admissible layout reads back as one $\pi$. If every admissible layout has a deterministic run that reaches $H$, then $\sched(P,\iota,r,H)$ is exactly the set of runs of these encodings. Any difference between two explored schedules is then exhibited by two runs, in the fixed environment, of saved variants that differ only in the layer coordinate of the contributing sprites.
\end{theorem}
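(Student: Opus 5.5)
The plan is to split the statement into three parts and prove them in order: a bijection between $S_k$ and the admissible layouts, the equality of $\sched$ with the set of runs of those layouts, and the witness clause.

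\emph{Bijection.} Let $p_1 < \dots < p_k$ be the saved layer positions of the contributing sprites, listed back to front. Encode $\pi$ by writing, for each $i$, the layer number $p_i$ into sprite $\pi(i)$, and leave every other field of the file unchanged. The Definition of admissible layouts permits exactly this edit, so the result is admissible. For the converse, an admissible layout fixes the positions of the stage and the silent sprites and places the $k$ contributing sprites at the $k$ positions $\{p_i\}$. Sorting these sprites by layer number recovers a unique $\pi$. The two maps are inverse to each other, and this gives the first sentence of the theorem. One point needs checking: the loader must preserve relative order when it reads layer numbers. It sorts targets by layer, and since the numbers are distinct there are no ties. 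So the VM's initial target list is the permuted list, with the stage first and every silent sprite in its saved slot.

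\emph{Equality.} $\sched$ is defined as the image of $S_k$ under $\pi \mapsto \mathrm{exec}_H(P,\iota,r,\pi)$. Under the hypothesis that every admissible layout has a deterministic run reaching $H$, the determinism established at the start of this subsection says the run is a function of the initial configuration. The bijection says that configuration is exactly the loaded encoding of $\pi$. Therefore $\mathrm{exec}_H(P,\iota,r,\pi)$ equals the run of the encoding of $\pi$. Composing with the bijection shows the two sets coincide.

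\emph{Obstacle: the run really depends only on the layout.} This is where I expect the difficulty. Other parts of the file could in principle carry layer-dependent information, for example cached draw order or target order in the JSON. To rule this out I would invoke Lemma~\ref{lem:layer}. Every queue slot is appended either by a delivery, which walks $L$, or by the initial flag delivery, which also walks $L$. So the evolution of $D \parallel T$ is determined by $L$ together with the slice semantics. The store transformers $\llbracket a \rrbracket$ are pinned functions of $\sigma$, and $\mathrm{obs}_\ell$ reads only $(\sigma, L)$. A short induction on the steps of Figure~\ref{fig:semantics} then shows that two files with the same initial $L$, store and scripts produce identical labelled traces up to $H$. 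The hypothesis that the run reaches $H$ excludes the failure modes: a crash, or a load that rejects the edited file.

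\emph{Witness clause.} Suppose two explored schedules differ, say $\mathrm{exec}_H(\pi) \neq \mathrm{exec}_H(\pi')$ or $F(\pi) \neq F(\pi')$. Take the encodings of $\pi$ and $\pi'$. By construction they agree on every field except the layer numbers of the contributing sprites. By the equality just proved, running them in the fixed environment reproduces the two differing runs, and that exhibits the difference.
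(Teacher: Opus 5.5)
Your proof is correct and follows the paper's argument. It uses the same layer-number encoding with its read-back inverse, which rests on the loader sorting targets by layer number. It then uses the completion and determinism premises to make the run of an encoding a function of $\pi$, so the two descriptions of $\sched(P,\iota,r,H)$ coincide and the witness clause follows immediately.

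The only divergence is your ``obstacle'' paragraph, which the paper does not need. Every field of the file other than the contributing sprites' layer numbers is identical across all encodings, so those fields belong to the fixed $P$. The paper also keeps determinism of the real VM as a premise rather than deriving it by induction over the calculus of Figure~\ref{fig:semantics}. It explicitly calls that calculus an explanatory fragment, not a verified model of the VM, so that induction could not establish the claim anyway.
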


\begin{proof}
A saved project stores a layer number per sprite and the loader sorts the targets on it. Assigning the contributing sprites the layer numbers of their designated positions in the order $\pi$ realizes $\pi$ and moves no other target. Reading the contributing sprites off any admissible layout gives one permutation, so the two directions are inverse. Under the two premises, running an encoding is a function of $\pi$, which defines $\mathrm{exec}$ and makes the two descriptions of the space coincide. The harness fixes the environment and checks completion and repeated-identity agreement (Section~\ref{sec:tool}); determinism itself remains a premise. The fixed seed makes $\mathsf{rand}$ a sequence consumed in slice order, the virtual clock makes $\mathsf{timer}$ a function of the tick index and the last reset, and the input is a fixed deterministic policy: scheduled events and one answer per request. A run that does not reach $H$ is reported rather than compared.
\end{proof}

The theorem is an encoding theorem about the initial layer coordinate; a layer block executed during a run is a fixed action of the program. The calculus of Figure~\ref{fig:semantics} is the explanatory fragment behind the proofs, not a verified model of the whole VM.

\subsection{Observation lenses}

What counts as a difference depends on what an observer watches. We use four observation lenses that form a chain from coarse to fine~\cite{scratchlens}, each a function of the recorded trace, stated here as implemented. The \emph{final-state} lens reports, at the horizon, every variable and the full contents of every list of every target (stage, sprites, and clones), the pose of every sprite and clone (position, direction, costume, size, visibility, every graphic effect, and the speech bubble), and the stage's backdrop and effects. The \emph{frame} lens adds the per-tick poses of every sprite and clone and the stage's backdrop and effects, the \emph{monitor} lens the per-tick variable and list values, and the \emph{full-trace} lens the per-tick population of clone identities. Entities are keyed by identity, a sprite by its name and a clone by its creator and creation ordinal. An observation is therefore a set of per-entity histories rather than a global interleaving log, and reordering work that touches disjoint entities leaves it unchanged. Numbers are quantized to six significant digits before comparison. The quantization absorbs the low-order noise of floating-point sums taken in different orders, at the price of merging differences below that resolution. Each coarser observation is a projection of the next finer one,
\[
  \text{final} \;\lensto\; \text{frame} \;\lensto\; \text{monitor} \;\lensto\; \text{full},
\]
so agreement at a finer lens forces agreement at every coarser one.

\begin{lemma}[Monotonicity]
\label{lem:mono}
If two runs agree under a finer lens, they agree under every coarser lens; a program robust at a finer lens is robust at every coarser one.
\end{lemma}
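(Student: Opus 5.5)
The plan is to reduce the lemma to the projection relation stated just above it. For any adjacent pair of lenses $\ell \lensto \ell'$ (coarser $\ell$, finer $\ell'$) I would exhibit a function $p_{\ell\ell'}$ on recorded observations with $\mathrm{obs}_\ell = p_{\ell\ell'} \circ \mathrm{obs}_{\ell'}$. The lens definitions supply these maps directly. The frame lens contains the final-state data as the last-tick entry of each per-entity history, together with the horizon values of variables and lists. The monitor lens is the frame observation plus per-tick variable and list values. The full lens is the monitor observation plus the per-tick clone population. Each $p$ therefore forgets components, and at the frame step it also reads off the horizon entry. Composing the three maps gives a projection for every pair of comparable lenses, so the relation is transitive and covers non-adjacent pairs as well.

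The first claim then follows because functions respect equality. Suppose $F_{\ell'}(\pi_1) = F_{\ell'}(\pi_2)$. Applying $p_{\ell\ell'}$ to both sides gives $F_\ell(\pi_1) = F_\ell(\pi_2)$. Equivalently, $\ker F_{\ell'} \subseteq \ker F_\ell$. For the second claim, define robustness at a lens as $F_\ell$ being constant on $S_k$ (equivalently, $\ker F_\ell$ is the full relation). If $F_{\ell'}$ is constant, then $F_\ell = p \circ F_{\ell'}$ is constant too. This is just the kernel inclusion applied to every pair of layouts.

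The main obstacle is making sure each $p$ really is a function on the quantized, identity-keyed observations, rather than on some idealized state. Two details need checking. \emph{Quantization:} if quantization to six digits is applied per value before comparison, then forgetting components commutes with it, and so does selecting the horizon entry. \emph{The frame step:} the final-state lens records data such as variables, lists, and the speech bubble, and I must confirm that each of these is either a horizon entry of some finer history or carried explicitly. If it is not, the frame lens should be read as \emph{adding to} the final-state data, which the paper's wording ("adds") supports. In that reading every step is literally a forgetful projection, and the proof is immediate.
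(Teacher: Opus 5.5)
Your proposal is correct and follows the paper's own argument. Each lens is a projection of the next finer one, so applying the projection to two equal observations gives equal observations (your inclusion $\ker F_{\ell'} \subseteq \ker F_\ell$), and robustness, being agreement over every pair of schedules, transfers the same way; your checks on quantization and on reading the frame lens as adding to the final-state data only make explicit what the paper's one-line proof takes for granted.
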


\begin{proof}
Each lens is a projection of the next finer one, so applying the projection to two equal observations gives equal observations; robustness is agreement over every pair in the space and transfers the same way.
\end{proof}

The same argument says when a verdict transfers across horizons. The monitor and full-trace observations are per-tick histories, so when the input and the execution prefix do not depend on the horizon they are projections of longer ones by truncation, and a witness at a shorter horizon remains one at a longer horizon. Note that the final-state and frame observations include endpoint values, so they are not: two runs can diverge and coincide again. The lens is part of a verdict because it states the strength of the claim. A difference at the full-trace lens may be invisible to a player; a difference at the final-state lens changes a recorded endpoint value, the strongest sensitivity claim, and still not by itself a violated requirement.

\subsection{Schedule-robustness}

\begin{definition}[Schedule-robustness]
$P$ is schedule-robust at lens $L$ for $(\iota,r)$ over horizon $H$ when all schedules in $\sched(P,\iota,r,H)$ agree under $L$. $P$ is schedule-sensitive at $L$ when two of them disagree; the two orders and their observations form a witness.
\end{definition}

\runin{The checking task} Given a project $P$, an input policy $\iota$, a seed $r$, a horizon $H$, and a lens $L$, decide whether all schedules in $\sched(P,\iota,r,H)$ agree under $L$. If they do, return a certificate; if not, return two stacking orders whose runs disagree, together with an adjacent pair of sprites whose swap changes the outcome and the resources their footprints share. Robustness is monotone along the chain by Lemma~\ref{lem:mono}, so a verdict is stated at a grading lens and implies the coarser ones. In every study in this paper the headline lens is the final-state lens, the most conservative choice: a program counted as sensitive there ends in a different state, not only in a different order of intermediate frames. For example, the robot game of Figure~\ref{fig:robot} is sensitive at this lens (at the horizon one order shows the robot's announcement and the other does not), and hence at every finer one.

Schedule-robustness is a property of a set of runs, not of any one run. For a fixed horizon, let $\mathcal{R}_{\ell,H}$ be the set of trace sets whose members all have the same observation at $\ell$ through tick $H$. The robustness question is whether $\sched(P,\iota,r,H) \in \mathcal{R}_{\ell,H}$, a hyperproperty in the sense of Clarkson and Schneider~\cite{clarkson2008hyper}. It is a $2$-safety hyperproperty: a violation is two traces with unequal observations, their prefixes through $H$ fix those observations, and every trace set extending both prefixes violates it, whereas no single prefix does. Two is the witness arity whatever the number of sprites.

\section{\sys{}: Exploring the Realizable Schedules}
\label{sec:tool}

\sys{} answers one question about a project: does the order of its contributing sprites change what the project does, and if so, which adjacent pair of sprites decides it and what do they contend for? It does so by running the project under realizable orders and comparing what it observes (Figure~\ref{fig:workflow}). This section describes (i) the run harness, (ii) the exhaustive exploration and its certificate, (iii) the reduction used beyond five sprites, and (iv) the attribution.

\begin{figwide}[t]
\centering
\pipelinegraphic
\caption{The \sys{} pipeline. Stage~1 extracts a typed footprint per sprite and builds the conflict graph. Stage~2 chooses the orders to run: all $k!$ for up to five sprites, otherwise one per acyclic orientation of the graph, or a sample of five when the graph admits too many. Stage~3 realizes each order by rewriting the layer numbers of the same file and runs it on the unmodified Scratch VM with the seed, clock, and input pinned. Stage~4 projects every trace to the four lenses and groups the orders into outcome classes. Stage~5 returns a certificate, a two-run witness with the pair of sprites whose swap changes the outcome, or an unresolved verdict for a sampled project without a witness.}
\label{fig:workflow}
\end{figwide}

\subsection{Realize, run, observe}

\sys{} runs the production Scratch VM (scratch-vm~5.0.300~\cite{scratchvm}) unmodified, inside a headless harness. To realize an order $\pi$ it rewrites the project's layer numbers so that the sprites stack in that order, exactly as a sequence of drags would, and runs the rewritten project. Nothing else in the file changes, so a witness is two files a user could have saved. A file whose saved layer numbers are invalid is first renumbered in its saved order, which changes no position; one file in the study needed this. The harness has two configurations. The \emph{bounding-box} configuration attaches a renderer stand-in that implements the VM's geometry calls (drawable order, bounds, and fencing) from the costume assets and leaves the VM's own redraw and visibility bookkeeping in place, so the sequencer runs the passes a rendered VM would. It answers touch queries from the costumes' bounding boxes and colour queries with \emph{no}, the approximation a headless analysis makes without a renderer. The \emph{pixel} configuration attaches a headless 2D renderer that answers the same queries from the costumes' pixel silhouettes with Scratch's own colour tolerance. It refuses a project whose costumes it cannot rasterize, and it stops rather than guess when a colour-touch test is polled under a graphic effect it cannot decide. The harness pins every other source of variation. (i) It seeds the random generator, so a draw is a function of the number of prior draws. (ii) It advances the clock one tick of simulated time per step, so a timer read depends on the tick and the last reset. (iii) It delivers the input under a fixed policy: green-flag, key, and click events at fixed ticks, and a text answer whenever the program asks. It also re-implements the two speech-bubble timers, which the VM runs on a real timer, on the same virtual clock, and drains the completions of promise-returning blocks at every tick boundary. Each order runs in a fresh process.

\runin{Contributing sprites} The contributing set is a closure computed from the block graph, independent of the layout. Its roots are the green-flag hats, the hats of the keys and clicks the input policy delivers, and the timer-threshold hats. Its edges follow broadcasts to their receivers, backdrop switches to the scripts they start, and clone creation to the clone-start scripts of the created sprite. A computed message name or backdrop selector reaches every receiver. A sprite reached by no root or edge is silent and stays in its saved place.

One pin removes a machine dependence rather than a source of noise. The VM's sequencer repeats its pass over the threads until they all yield, a redraw is requested, or a work budget of $0.75$ of the step time is spent. It measures that budget on the wall clock~\cite{scratchvm}. On a slower host fewer passes fit in a frame, so host speed changes what a program does. Our harness redirects that timer to the virtual clock, which advances a fixed $0.5$~ms per call. The budget becomes a deterministic cap of fifty passes per frame, the constant $B$ of Figure~\ref{fig:semantics}, and our numbers say nothing about host speed.

We record the observation at every lens of Section~\ref{sec:model} from one trace. In every experiment reported here we run the identity order twice before any comparison and require the two runs to agree at the full-trace lens. A project that fails this diagnostic, or a run that stops before the horizon because a block threw or a bound fired, is reported as untrusted rather than compared. Two exceptions arose in the whole study: one public project (\nPublicUntrusted{} in Table~\ref{tab:prevalence}) fails inside the VM at its first tick under every order, and one more stops the pixel configuration at a colour-touch test it will not decide. With the pins in place, and determinism as the premise this diagnostic checks but does not prove, a difference between two runs traces to the order.

\subsection{Exhaustive exploration and the certificate}

For a project whose contributing sprites number $k \le 5$, we run all $k!$ orders. The verdict is then exact. The project is robust at a lens if and only if all runs agree at that lens. A sensitive verdict comes with the set of distinct outcomes, each with the orders that produce it. A robust verdict is a certificate rather than an absence of observed failures, because Theorem~\ref{thm:realizable} covers every layout in the admissible domain: no permutation of the contributing sprites among their saved positions changes the observation under this fixed experiment. The cost is at most $5! = 120$ layouts, plus the two identity runs, of a program that runs for a few seconds.

\subsection{Beyond five sprites: a conditional reduction}

Beyond five sprites the enumeration grows quickly (720 runs at six, 5{,}040 at seven). However, most of those runs agree: two sprites whose scripts touch disjoint state, in the sense made precise below, cannot influence each other, so swapping them cannot change the observation. We make this precise with a \emph{footprint} for each sprite, a typed record of the resources its scripts read, write, create, delete, and consume: (i) variables and lists, (ii) the pose properties of each sprite, (iii) clone families, (iv) broadcast channels, and (v) the ordered tokens of the random stream and the timer. The footprint includes the sprite's clone-start scripts, is computed statically, and over-approximates by design. An unmodeled opcode is recorded as touching an external resource that everything conflicts with, so it forces dependence and never a false independence. Two sprites conflict when one writes a resource the other reads or writes. Two resources are not variables at all. One is the tail of the thread queue. A sprite that broadcasts or creates a clone appends threads there, and the order in which two sprites do so fixes the relative order of those threads for as long as both survive (Lemma~\ref{lem:layer}). Two starters therefore conflict unless the targets their threads and descendants run in neither overlap nor conflict, and a sender whose message name is computed at run time conflicts with every other starter. The other is the VM's cap of 300 live clones, which every clone creator draws on.

Write $G_P$ for the \emph{conflict graph} (Stage~1 of Figure~\ref{fig:workflow}): one vertex per contributing sprite, one edge per conflicting pair. Three kinds of statement follow: (i) combinatorics about graphs and layouts, which assumes nothing about the VM; (ii) a commutation theorem for an ideal fragment, proved outright; and (iii) the general commutation claim, stated under a named condition on the footprint abstraction, whose consequence, homogeneity of the outcome classes, we check on every exhaustive table.

\runin{Combinatorics} For a layout $\pi$, let $\mathrm{orient}(\pi)$ be the acyclic orientation of $G$ that points every edge from the sprite $\pi$ puts in front to the one it puts behind. Write $\Omega(G)$ for the acyclic orientations of $G$ and $a(G) = |\Omega(G)|$ (at most $k!$, and much less for a sparse graph).

\begin{lemma}[Orientation-class connectivity]
\label{lem:connect}
Two layouts have the same $G$-orientation if and only if a sequence of swaps of adjacent sprites that are not an edge of $G$ joins them.
\end{lemma}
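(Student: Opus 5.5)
The plan is to prove the two directions separately, treating a layout $\pi$ as a linear order on the $k$ contributing sprites and $\mathrm{orient}(\pi)$ as its restriction to the edges of $G$.

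\emph{(If.)} This direction is a one-step invariance check, closed under composition. Suppose $\pi'$ is obtained from $\pi$ by swapping two sprites $u,v$ that are adjacent in $\pi$ and not joined in $G$. Then every pair other than $\{u,v\}$ keeps its relative order, because no third sprite lies between $u$ and $v$. The pair $\{u,v\}$ is not an edge, so no edge changes direction and $\mathrm{orient}(\pi')=\mathrm{orient}(\pi)$. Induction on the length of the swap sequence finishes this direction.

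\emph{(Only if.)} Fix an orientation $O$ and two layouts $\pi,\pi'$ with $\mathrm{orient}(\pi)=\mathrm{orient}(\pi')=O$. I will transform $\pi$ into $\pi'$ by admissible swaps, inducting on the inversion count $\mathrm{inv}(\pi,\pi')$, the number of pairs ordered differently by $\pi$ and $\pi'$. If the count is zero, $\pi=\pi'$.

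Otherwise there is a pair of sprites adjacent in $\pi$ that $\pi'$ orders the other way. This is the standard bubble-sort fact: if every $\pi$-adjacent pair agreed with $\pi'$, then by transitivity along $\pi$ the two linear orders would coincide. Call this pair $u,v$, with $u$ immediately in front of $v$ in $\pi$ and $v$ in front of $u$ in $\pi'$.

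The pair $\{u,v\}$ cannot be an edge of $G$, since both layouts induce $O$ and would then order $u,v$ the same way. So swapping $u$ and $v$ in $\pi$ is admissible. By the first direction the result still induces $O$, and it has exactly one fewer inversion against $\pi'$, because only the pair $\{u,v\}$ changed. The induction hypothesis then supplies the remaining sequence.

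The only step needing care is the existence of an adjacent discordant pair. I would state it explicitly: if every consecutive pair of $\pi$ is ordered as in $\pi'$, then $\pi'$ contains the chain $\pi(1),\dots,\pi(k)$ in order, hence equals $\pi$.

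I would also remark, matching the paper's convention, that adjacency is among contributing sprites only, with silent sprites ignored. The statement is therefore purely combinatorial on $S_k$ and uses nothing about the VM. As a by-product, the orientation classes are exactly the connected components of the graph on $S_k$ whose edges are the non-$G$ adjacent swaps. This is what the later reduction, which runs one representative per element of $\Omega(G)$, needs.
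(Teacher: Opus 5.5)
Your proof is correct and is essentially the paper's bubble-sort argument. The easy direction is identical; for the converse, the paper bubbles the target layout's first sprite to the front, justifying each swap by incomparability in the transitive closure of the orientation, and recurses, whereas you swap any adjacent discordant pair and induct on the inversion count, using the more direct observation that two layouts with the same orientation can disagree only on non-edges (which makes the transitive closure unnecessary), and both routes produce a path whose length is exactly the inversion count.
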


\begin{proof}
An adjacent swap changes the relative order of its two sprites only, so a swap outside $G$ preserves every edge direction. Conversely, the layouts with orientation $\omega$ are the linear extensions of its transitive closure $\leq_\omega$. Let $x$ be first in the target extension; every $y$ before $x$ in the current one is incomparable with $x$, since either comparison would contradict one of the two extensions. Bubble $x$ to the front, delete it, and repeat; incomparable sprites share no edge, so every swap is outside $G$.
\end{proof}

\begin{theorem}[Factorization]
\label{thm:factor}
Let $F : S_k \to \mathcal{O}$ be unchanged by every swap of adjacent sprites outside $G$. Then there is a unique $\overline{F} : \Omega(G) \to \mathcal{O}$ with $F = \overline{F} \circ \mathrm{orient}$, and one layout per orientation observes every outcome $F$ takes.
\end{theorem}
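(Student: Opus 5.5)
The plan is to reduce the statement to Lemma~\ref{lem:connect} and then define $\overline{F}$ on orientations, checking that it is well defined, that it factors $F$, and that it is unique.

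First, $F$ is constant on each fibre of $\mathrm{orient}$. Suppose $\mathrm{orient}(\pi)=\mathrm{orient}(\pi')$. By Lemma~\ref{lem:connect}, a finite sequence $\pi=\pi_0,\pi_1,\dots,\pi_m=\pi'$ joins the two layouts, where consecutive layouts differ by one swap of adjacent sprites that are not an edge of $G$. The hypothesis says $F(\pi_{i})=F(\pi_{i+1})$ at each step, so induction on $m$ gives $F(\pi)=F(\pi')$.

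Next, every acyclic orientation is hit by some layout. Given $\omega\in\Omega(G)$, its transitive closure $\le_\omega$ is a partial order, because $\omega$ is acyclic. Any linear extension of $\le_\omega$, which exists by topological sorting, is a layout $\pi$ that puts the tail of every edge in front of its head. Hence $\mathrm{orient}(\pi)=\omega$, and $\mathrm{orient}$ is surjective onto $\Omega(G)$.

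Now define $\overline{F}(\omega)=F(\pi)$ for any $\pi$ with $\mathrm{orient}(\pi)=\omega$. Surjectivity guarantees such a $\pi$ exists, and the first step makes the value independent of the choice. By construction $F=\overline{F}\circ\mathrm{orient}$. Uniqueness also follows from surjectivity: two maps that agree after precomposition with a surjection agree everywhere.

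For the last clause, pick one representative $\pi_\omega$ for each $\omega$. Any outcome $o=F(\pi)$ equals $\overline{F}(\mathrm{orient}(\pi))=F(\pi_{\mathrm{orient}(\pi)})$, so $o$ is observed among these $a(G)$ runs.

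The only step with real content is the connectivity argument, and Lemma~\ref{lem:connect} already supplies it. The remaining point to state carefully is surjectivity: it is what makes $\overline{F}$ total and unique on all of $\Omega(G)$, not just on the image of $\mathrm{orient}$.
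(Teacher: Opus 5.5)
Your proof is correct and follows the paper's: Lemma~\ref{lem:connect} makes $F$ constant on each orientation class, surjectivity of $\mathrm{orient}$ onto $\Omega(G)$ makes $\overline{F}$ total and unique, and one representative per class then sees every outcome. You get surjectivity from a linear extension (topological sort), which is the same construction as the paper's repeated removal of a source; the paper additionally justifies the step inline by noting that a finite acyclic orientation always has a source.
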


\begin{proof}
Lemma~\ref{lem:connect} makes $F$ constant on each orientation class, so $\overline{F}$ is well defined on the classes that occur. Every acyclic orientation occurs: remove a source repeatedly (a finite acyclic orientation has one, else following incoming edges would close a cycle), and the removal order is a layout with that orientation. Hence every class is nonempty, $\overline{F}$ is unique, and any one member of each class observes its outcome.
\end{proof}

\begin{corollary}[Cost]
\label{cor:cost}
Under Factorization, at most $a(G)$ representative runs enumerate every outcome, and $a(G) = (-1)^{k}\chi_G(-1)$ for the chromatic polynomial $\chi_G$~\cite{stanley1973acyclic}. The count is $1$ for an edgeless graph, $2^{m}$ for a forest with $m$ edges, and $k!$ for the complete graph.
\end{corollary}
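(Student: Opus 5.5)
The corollary follows almost directly from Theorem~\ref{thm:factor}, so the plan has three parts: the upper bound, Stanley's identity, and the three special counts. For the bound, Factorization writes $F = \overline{F}\circ\mathrm{orient}$. Its proof also shows that every acyclic orientation is realized by some layout. So we pick one layout $\pi_\omega$ with $\mathrm{orient}(\pi_\omega)=\omega$ for each $\omega\in\Omega(G)$. Any outcome $F(\pi)$ then equals $\overline{F}(\mathrm{orient}(\pi)) = F(\pi_{\mathrm{orient}(\pi)})$, so these $a(G)$ runs observe every outcome. The count is ``at most'' because distinct orientations may still share an outcome.

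For the identity $a(G)=(-1)^k\chi_G(-1)$ we cite Stanley~\cite{stanley1973acyclic}. If a self-contained argument is wanted, we would use deletion--contraction. Both $a$ and $\chi$ satisfy $f(G)=f(G\setminus e)\pm f(G/e)$ with matching signs. The acyclic-orientation side is $a(G)=a(G\setminus e)+a(G/e)$. To prove it, take an acyclic orientation of $G\setminus e$ with $e=uv$. Either exactly one direction of $e$ keeps it acyclic, or both do. Both do precisely when there is no directed $u$--$v$ path in either direction, and those orientations correspond bijectively to acyclic orientations of $G/e$. On the chromatic side, $\chi_G(q)=\chi_{G\setminus e}(q)-\chi_{G/e}(q)$. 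Since $G/e$ has $k-1$ vertices, evaluating at $q=-1$ and multiplying by $(-1)^k$ gives the same recurrence for $(-1)^k\chi_G(-1)$. The base case is the edgeless graph on $k$ vertices, where both sides equal $1$. The main obstacle is the bijection in the contraction step, which needs care about which orientations of $G\setminus e$ admit both directions of $e$. Citing Stanley avoids this.

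The special cases follow from the identity or from direct counting. For an edgeless graph there is one orientation, and $\chi=q^k$ gives $(-1)^k(-1)^k=1$. For a forest with $m$ edges, no orientation can contain a cycle, so each edge is oriented independently and there are $2^m$ orientations. For $K_k$, an acyclic orientation is a transitive tournament, which is the same as a linear order, giving $k!$. Equivalently, $\chi_{K_k}(q)=q(q-1)\cdots(q-k+1)$ evaluates at $-1$ to $(-1)^k k!$.
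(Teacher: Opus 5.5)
Your proposal is correct and matches the paper's proof: the bound comes from Factorization, with every acyclic orientation realized by some layout, and the identity is Stanley's via the deletion--contraction recurrence $a(G)=a(G-e)+a(G/e)$ that $(-1)^k\chi_G(-1)$ also satisfies; the forest and complete-graph counts follow because every orientation of a forest is acyclic and an acyclic orientation of $K_k$ is a linear order. Your spelled-out contraction step is sound, provided you note that parallel edges created by contracting $e$ (at common neighbours of its endpoints) must be oriented alike in an acyclic orientation, so merging them changes neither $a$ nor $\chi$.
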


\begin{proof}
The bound is Theorem~\ref{thm:factor}; the identity is Stanley's, by the deletion--contraction recurrence $a(G) = a(G-e) + a(G/e)$ that $(-1)^{k}\chi_G(-1)$ also obeys. Every orientation of a forest is acyclic; an acyclic orientation of $K_k$ is a linear order.
\end{proof}

For example, the robot and virus of Figure~\ref{fig:robot} share \textsf{score}, so their graph is one edge and $a(G) = 2$: two runs decide the program.

\runin{The semantic graph} The exhaustive tables (\nRecheckTables{} of them, Section~\ref{sec:results}) let us ask how far the static graph is from the best graph any analysis could produce. Order partitions of $S_k$ by refinement ($Q \preceq Q'$ when every block of $Q$ lies in a block of $Q'$) and graphs on the sprites by edge inclusion. Let $\gamma(G)$ be the partition of $S_k$ by $G$-orientation, and for a partition $Q$ let $\alpha(Q)$ contain a pair of sprites exactly when some layout that places them adjacently changes its $Q$-block when they are swapped. For an experiment, $Q$ is the outcome partition $\ker F$.

\begin{theorem}[The least sound graph]
\label{thm:alpha}
For every partition $Q$ and graph $G$, $\alpha(Q) \subseteq G$ if and only if $\gamma(G) \preceq Q$; moreover $\alpha(\gamma(G)) = G$. Hence $\alpha(Q)$ is the least graph whose orientation classes are homogeneous for $Q$, and $\alpha$ is left adjoint to $\gamma$ under reverse refinement.
\end{theorem}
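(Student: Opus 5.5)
The plan is to prove the two directions of the equivalence $\alpha(Q) \subseteq G \iff \gamma(G) \preceq Q$ separately, then obtain $\alpha(\gamma(G)) = G$ as a special case, and read off the least-graph and adjunction claims formally.

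For ($\Leftarrow$), assume $\gamma(G) \preceq Q$, so each $G$-orientation class lies inside one $Q$-block. Take a pair $\{x,y\} \in \alpha(Q)$. Then some layout $\pi$ with $x,y$ adjacent changes its $Q$-block when they are swapped. Now suppose $\{x,y\} \notin G$. By Lemma~\ref{lem:connect}, or directly because the swap changes only the relative order of $x$ and $y$, the swapped layout has the same $G$-orientation as $\pi$. It therefore lies in the same $G$-class and hence in the same $Q$-block, which is a contradiction. So $\{x,y\} \in G$.

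For ($\Rightarrow$), assume $\alpha(Q) \subseteq G$ and take two layouts with the same $G$-orientation. Lemma~\ref{lem:connect} joins them by a chain of adjacent swaps of non-edges of $G$. Each such pair is outside $G$, hence outside $\alpha(Q)$. By the definition of $\alpha$, a non-$\alpha$ pair swapped while adjacent never changes the $Q$-block. So every step of the chain stays in one $Q$-block, and the two layouts share a block. This is exactly Theorem~\ref{thm:factor} applied to the quotient map $F : S_k \to S_k/Q$, and I would cite it that way.

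For $\alpha(\gamma(G)) = G$, the inclusion $\subseteq$ follows from ($\Leftarrow$) with $Q = \gamma(G)$, since $\gamma(G) \preceq \gamma(G)$ trivially. For $\supseteq$, given an edge $\{x,y\} \in G$ I need one layout in which $x,y$ are adjacent. I would take any layout with $x,y$ consecutive, for instance $x,y$ first and the rest in any order. Swapping them reverses the orientation of the edge $xy$, so the $G$-orientation changes, the $\gamma(G)$-block changes, and $\{x,y\} \in \alpha(\gamma(G))$. This step uses $k \ge 2$ and that $G$ has vertices exactly the contributing sprites.

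The remaining claims are formal. Taking $G = \alpha(Q)$ in the equivalence gives $\gamma(\alpha(Q)) \preceq Q$, so $\alpha(Q)$ is homogeneous. Any homogeneous $G$ contains $\alpha(Q)$, so $\alpha(Q)$ is least. The equivalence is precisely the Galois-connection condition once partitions are ordered by reverse refinement and graphs by inclusion; monotonicity of both maps then follows automatically. The main obstacle is purely bookkeeping: getting the direction of $\preceq$ and ``reverse refinement'' consistent so that the stated adjunction is left rather than right. I would fix conventions explicitly: $Q \le Q'$ iff $Q' \preceq Q$, so $\alpha(Q) \subseteq G \iff Q \le \gamma(G)$, which is the left-adjoint form. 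The mathematical content itself is carried entirely by Lemma~\ref{lem:connect}.
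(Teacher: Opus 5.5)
Your proof is correct and follows the paper's argument essentially step for step. Both directions of the equivalence rest on Lemma~\ref{lem:connect}, and $\alpha(\gamma(G)) = G$ comes from the fact that a non-edge swap keeps the orientation while swapping the adjacent endpoints of an edge reverses it; your explicit order convention for the adjunction and your derivation of leastness spell out what the paper compresses into the word ``Hence.''
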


\begin{proof}
If $\alpha(Q) \subseteq G$, Lemma~\ref{lem:connect} joins any two layouts of a $\gamma(G)$-block by swaps outside $G$, none in $\alpha(Q)$, so all stay in one $Q$-block. If $\gamma(G) \preceq Q$, a swap outside $G$ keeps its orientation block and so its $Q$-block, hence every edge of $\alpha(Q)$ lies in $G$. A swap outside $G$ keeps the orientation class, while swapping the adjacent endpoints of an edge reverses it, which gives $\alpha(\gamma(G)) = G$.
\end{proof}

Hence, if the classes of $G$ are homogeneous for the outcome partition $Q$, then $|Q| \leq a(\alpha(Q)) \leq a(G) \leq k!$. The first gap, $a(\alpha(Q))/|Q|$, is the price of describing outcomes by pairwise orders at all (for example, partition layouts by parity, and $\alpha(Q)$ is complete); the second, $a(G)/a(\alpha(Q))$, is the price of the particular analysis. Homogeneity is decidable from a complete table, which is how Section~\ref{sec:results} checks every enumerable program.

\runin{An ideal fragment} Consider the fragment without deliveries, clones, restarts, and cancellation, in which a fixed cohort of threads runs slices with \emph{local} footprints. A slice's enabledness and effect depend only on the store restricted to $R$, it writes only $W$, and beyond that store effect it alters only its own control and yield state. Let the scheduler run the eligible threads in complete rounds under a fixed round budget, skipping frame-yielded threads until the next tick, and let slices neither read the layout nor change that policy. There the hypothesis of Theorem~\ref{thm:factor} holds outright. Two slices with $W_a \cap (R_b \cup W_b) = \emptyset$ and $W_b \cap R_a = \emptyset$ reach the same store and residual controls in either order; an induction over rounds exchanges the two adjacent blocks of threads one pair of slices at a time while preserving every round boundary, so ticks observe equal states.

\runin{The general case} In contrast, real programs deliver broadcasts, create and delete clones, restart scripts, and wait. What the footprint abstraction must supply for commutation to hold in general has three parts. \emph{Access coverage}: every store effect, enabledness test, delivered event, control decision, and yield of a slice depends only on covered reads and lands only on covered writes; ordered consumption counts as a write, and the scheduling resources (activation and wait state, restart, clone allocation and the clone cap, timer accounting) are resources like any other. \emph{Persistent enabledness}: a pair certified independent stays independent after either side runs. Call two states \emph{related} when they agree on the store, the controls, the input position, and the clock, up to a renaming of fresh identities, and their queues differ only by the permitted swaps. \emph{Continuation compatibility}: from related states the two executions can be matched segment by segment up to the permitted swaps and a renaming of fresh identities, no swap moves work across a tick boundary, and the lens is invariant under the matching. Call an abstraction with these three properties \emph{adequate} for $G$ and $\ell$.

\begin{theorem}[Commutation]
\label{thm:commute}
If every admissible layout runs deterministically to $H$ and the footprint abstraction is adequate for $G_P$ and $\ell$, then swapping two adjacent sprites that are not an edge of $G_P$ leaves $F$ unchanged.
\end{theorem}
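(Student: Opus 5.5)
We will prove the statement by a simulation argument: build a relation between the two runs and show it is preserved step by step until the horizon. Fix a layout $\pi$ and let $\pi'$ be $\pi$ with the adjacent contributing sprites $x,y$ exchanged, where $\{x,y\}\notin G_P$. By the determinism premise and Theorem~\ref{thm:realizable}, $F(\pi)$ and $F(\pi')$ are the lens-$\ell$ observations of two well-defined runs $\rho,\rho'$ that both reach $H$. It therefore suffices to exhibit a relation between the configurations of $\rho$ and $\rho'$ that satisfies three conditions. It holds initially. It is preserved by the rules of Figure~\ref{fig:semantics} segment by segment. At every \textsc{Tick} it forces equal labels $o$.

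The relation is the one named in the adequacy conditions: \emph{related} states, which agree on the store, controls, input position and clock up to renaming of fresh identities, and whose queues differ only by permitted swaps.

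\begin{itemize}
\item \textbf{Initial step.} At the first green-flag delivery, Lemma~\ref{lem:layer} says the appended slots appear in layout order. Since $x,y$ are adjacent among contributing sprites, and silent sprites start nothing, the two initial queues differ exactly by exchanging the block of $x$-slots with the block of $y$-slots. This is a permitted swap, because the pair is not an edge. Hence the initial configurations are related.
\item \textbf{Inductive step.} We proceed by induction over passes and ticks. Inside a pass, the two executions run the same slices except where the swapped blocks are interleaved differently. Access coverage says each slice's effect, enabledness, yield and delivered events depend only on its covered reads. Non-adjacency in $G_P$ gives disjointness of $W$ against $R\cup W$. This disjointness covers the queue-tail resource and the clone cap, so the interleaved deliveries append in the same relative order, or in orders differing only between non-conflicting targets. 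The local-fragment argument from the ideal fragment then exchanges the blocks one pair of slices at a time. Persistent enabledness keeps the pair independent after either side has stepped, which lets the exchange continue across later passes. Continuation compatibility supplies the matching of segments, with fresh identities (clones) renamed, and it guarantees that no swap crosses a tick boundary. So at each \textsc{Tick} both runs are in related states with the same pass index and clock.
\item \textbf{Observation.} Lens invariance under the matching gives equal $\mathrm{obs}_\ell(\sigma,L)$ at each tick, and in particular at $H$. This yields $F(\pi)=F(\pi')$.
\end{itemize}

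The main obstacle is the inductive step for deliveries. A broadcast raised inside one swapped block appends receivers whose slots lie after all existing slots, so the difference between the queues is no longer a single adjacent-block swap. It also propagates to descendants and clones, whose layout positions sit directly behind their creators and so inherit the $x/y$ exchange. I would handle this by defining permitted swaps as a closure: any reordering of slots whose owning targets, together with their descendants, neither overlap nor conflict. This is exactly the queue-tail conflict rule given for starters. Then, rather than re-deriving the invariant, I would appeal to continuation compatibility to carry it through each segment.

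Because of this, the proof is essentially the observation that the three adequacy conditions are precisely the inductive hypotheses this simulation needs. The substantive work lies in checking that the initial swap is permitted and that ticks align.
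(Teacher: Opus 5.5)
Your proposal is correct and takes essentially the same approach as the paper's proof: related initial states, segment-by-segment matching by continuation compatibility with the commuting diamonds supplied by access coverage and persistent enabledness, induction over matched ticks, and lens invariance (the lens being a function of the record) to conclude $F(\pi)=F(\pi')$. You spell out the initial relatedness via Lemma~\ref{lem:layer} in more detail than the paper does, while the paper states explicitly that determinism and completion rule out an unbounded delay before a tick, a step your premise that both runs reach $H$ covers implicitly.
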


\begin{proof}
The two runs start in related states, since the two layouts differ only in the order of the swapped pair while the store, the controls, and the input are the same. Continuation compatibility gives matching finite segments that preserve the relation and emit equal tick labels. Access coverage and persistent enabledness supply the commuting diamonds those segments are built from, one per pair of exchanged slices, with the same store after both sides of a diamond (not after each single slice). Determinism and completion rule out an unbounded delay before a required tick. Induction over matched ticks gives equal records through $H$, and the lens is a function of the record.
\end{proof}

Adequacy relates the abstraction, the program, and the lens. The three rules of the conflict definition (the queue tail, the clone cap, and the computed message name) discharge particular obligations, for instance that two starters with disjoint, non-conflicting targets append threads that commute. They do not prove the extractor adequate for every program, and we do not claim it. What we can check is each instance: by Theorem~\ref{thm:alpha}, homogeneity of a complete table is $\alpha(\ker F) \subseteq G_P$.

\runin{A lower bound} One might wonder whether the bound of Corollary~\ref{cor:cost} is an artifact of a coarse independence test. However, for explorers that must return an observed witness for every outcome, the interface \sys{} implements, it is tight. An \emph{explorer} receives an opaque program, the layout domain, the fixed experiment, and a sound graph $G$. It may request a completed run at any layout and observe $F$ there, and nothing else. It must return a map from every outcome in $F(S_k)$ to a layout at which it observed that outcome.

\begin{theorem}[Optimal witness-producing exploration]
\label{thm:tight}
If $F$ is constant on the classes of $G$, one query per acyclic orientation suffices. For every graph $G$ on $k$ vertices there is a program with conflict graph $G$ and $a(G)$ distinct outcomes at the final-state lens, on which every correct explorer makes at least $a(G)$ queries.
\end{theorem}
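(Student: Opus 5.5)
The upper bound is Theorem~\ref{thm:factor}. If $F$ is constant on the $G$-orientation classes, the explorer picks one layout per acyclic orientation: a topological order of $\omega$, obtained by repeatedly removing a source as in that proof. It queries those $a(G)$ layouts and records, for each outcome it sees, the layout that produced it. Every outcome in $F(S_k)$ is $\overline{F}(\omega)$ for some $\omega$, so every outcome is witnessed.

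For the lower bound, the plan is a counting argument. Any correct explorer must query at least one layout per distinct outcome, because it may only return layouts at which it observed the outcome. It therefore suffices to build, for each $G$, a program whose conflict graph is exactly $G$ and whose map $F$ is injective on orientations, i.e.\ $|F(S_k)| = a(G)$. I would give one sprite per vertex and one global variable $v_e$ per edge $e=\{u,w\}$. Every sprite has a single green-flag script consisting of one slice (no yields): for each incident edge $e$, it executes \textsf{set} $v_e$ to the sprite's own id. No other blocks appear, so there are no broadcasts, clones, random draws, or timer reads.

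With this program, the conflicts are exactly the shared variables, since two sprites write a common resource iff they are adjacent in $G$. The static footprints then give $G_P = G$, with no spurious queue-tail or clone-cap edges because there are no starters. For the outcome, all first slices run in the first pass in layout order, front to back, per Lemma~\ref{lem:layer}. Each $v_e$ therefore ends holding the id of whichever endpoint runs later, i.e.\ the one behind. The final-state lens reads every variable, so $F(\pi)$ determines, for each edge, which endpoint $\pi$ puts in front. That is exactly $\mathrm{orient}(\pi)$.

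Conversely, $F(\pi)$ is a function of $\mathrm{orient}(\pi)$ alone, and poses and all other state are identical across layouts. Hence $F = \overline{F}\circ\mathrm{orient}$ with $\overline{F}$ injective. Every orientation is realized (Theorem~\ref{thm:factor}), so there are $a(G)$ distinct outcomes and every correct explorer makes at least $a(G)$ queries. This also confirms that $G$ is sound here, since by construction the hypothesis of the first part holds. In Theorem~\ref{thm:alpha} terms, $\alpha(\ker F)=G$.

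The main obstacle is making ``conflict graph $G$'' match the paper's static extractor exactly, rather than a superset. I would handle this by restricting the construction to the ideal fragment, where only variable writes appear; there the access-coverage conditions are immediate and the extractor records precisely these reads and writes. Two edge cases need a brief check. Isolated vertices keep their own sprite with an empty script (or a write to a private variable) so they remain contributing. If $G$ has no edges, then $a(G)=1$ and the claim is trivial.
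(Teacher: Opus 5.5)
Your proposal is correct and follows the paper's proof. Sufficiency comes from Theorem~\ref{thm:factor}; for the bound there is one sprite per vertex and one variable per edge written by both endpoints in a single green-flag slice, so by Lemma~\ref{lem:layer} the edge variables encode $\mathrm{orient}(\pi)$ and give $a(G)$ outcomes, each needing its own query. Writing sprite ids instead of $0/1$ is immaterial, and keeping isolated vertices contributing through a green-flag hat fills a small detail the paper leaves implicit.
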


\begin{proof}
Sufficiency is Theorem~\ref{thm:factor}. For the bound, give every vertex a sprite and every edge $e = \{u,v\}$ ($u < v$) a variable $v_e$; sprite $u$ assigns $v_e \gets 0$ and sprite $v$ assigns $v_e \gets 1$, each in one initial green-flag slice and with no other blocks. Exact footprints share a variable exactly on the edges, so the conflict graph is $G$. The initial delivery runs the slices in layout order (Lemma~\ref{lem:layer}), so the final value of $v_e$ names the endpoint that ran later and the vector of edge variables is the orientation. The $a(G)$ orientations give $a(G)$ outcomes, each of which a correct explorer must have observed.
\end{proof}

The bound concerns enumerating outcomes with observed witnesses on a worst-case member of the graph class; it says nothing about deciding robustness, which two well-chosen runs settle on that program.

Corollary~\ref{cor:cost} also predicts where a reduction can and cannot help. In the course set the conflict graph is complete for \nCourseGraphComplete{} of the \nCourseGraphMeasured{} concurrent programs (median edge density \pCourseGraphDensity{}), so $a(G_P) = k!$ and no reduction is possible; we measure a saving on \nCourseRedBelow{} of \nCourseRedEnum{} enumerable programs, and the median factor over all of them is \xCourseRedMedian{}. In contrast, the public sample is sparser (median density \pPublicGraphDensity{}, complete for \nPublicGraphComplete{} of \nPublicGraphMeasured{}), and there the same reduction saves a median factor of \xPublicRedMedian{} over all enumerable projects.

Theorem~\ref{thm:commute} is conditional, and a footprint that omits an access would turn a real dependence into a missed outcome. Hence, \sys{} checks each instance it can. For every project with $k \le 5$ the pipeline runs all $k!$ layouts and checks that the representatives cover the outcome set at all four lenses. A separate audit groups the layouts by their orientation of $G_P$ and requires every group to carry one outcome at the final-state lens, on all \nRecheckTables{} enumerated 30-tick tables (Section~\ref{sec:results}); the complete-graph tables of the reduced branch are homogeneous trivially, one layout per class. This certifies homogeneity for those experiments, not the adequacy of the extractor elsewhere. Where the reduction is the only exploration ($k > 5$), the verdict is conditional on the graph being sound for that experiment. A project whose graph admits more than 1{,}024 orientations is run under a sample of five orders, which can exhibit sensitivity but cannot certify robustness, and is reported as a lower bound.

\subsection{Witnesses and attribution}

We attach to every sensitive verdict a witness, two orders and their observations, and an attribution. In the exhaustive table we read the attribution off directly. Two orders that differ only by swapping two adjacent sprites and produce different outcomes make those two sprites a \emph{racing pair}, and we intersect their footprints to name the resources they contend for. Beyond the exhaustive range a witness is two layouts, possibly far apart, and a pair they invert is only a candidate. The following procedure certifies one; it re-runs both witness layouts first and stops if they agree, so every certified pair rests on a reproduced witness.

\begin{theorem}[Adjacent witness localization]
\label{thm:localize}
Let $d(\pi,\pi')$ count the pairs of sprites the two layouts order differently. If $F(\pi) \neq F(\pi')$, there is a path $\pi = \pi_0, \dots, \pi_D = \pi'$ of adjacent swaps with $D = d(\pi,\pi') \leq k(k-1)/2$, and at most $\lceil \log_2 D \rceil$ further runs find a $j$ with $F(\pi_j) \neq F(\pi_{j+1})$. If that swap exchanges $s$ and $t$, then $\{s,t\} \in \alpha(\ker F)$, so every sound graph for the experiment contains the edge.
\end{theorem}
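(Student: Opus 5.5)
The proof splits into three parts: build a short path of adjacent swaps, run a binary search along it, and then read the located swap as an edge of $\alpha(\ker F)$ by definition.

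\emph{Step 1: the path.} I take $d(\pi,\pi')$ to be the number of inversions of $\pi'\circ\pi^{-1}$, that is, the Kendall tau distance. This is at most $\binom{k}{2}=k(k-1)/2$, since each unordered pair of sprites contributes at most one inversion. I construct the path by bubble sort from $\pi$ toward $\pi'$. At each step I swap an adjacent pair that $\pi$'s current order places opposite to $\pi'$; such a pair exists whenever the current layout differs from $\pi'$, because a permutation with no adjacent inversion relative to a target is the target. Each such swap removes exactly one disagreeing pair and creates none, since an adjacent swap changes only the relative order of its two sprites, as in the proof of Lemma~\ref{lem:connect}. The path therefore has length exactly $D=d(\pi,\pi')$. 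Its intermediate layouts are all permutations of the contributing sprites among their positions, so they lie in $S_k$ and are admissible; by Theorem~\ref{thm:realizable} each one is a run of a saved variant.

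\emph{Step 2: the search.} Because $F(\pi_0)\neq F(\pi_D)$, the sequence $F(\pi_0),\dots,F(\pi_D)$ is not constant. I maintain indices $lo<hi$ with $F(\pi_{lo})\neq F(\pi_{hi})$, starting from $lo=0$ and $hi=D$; both endpoint values are known from the reproduced witness. I query the midpoint $m$. If $F(\pi_m)\neq F(\pi_{lo})$, I set $hi=m$; otherwise $F(\pi_m)=F(\pi_{lo})\neq F(\pi_{hi})$, and I set $lo=m$. The invariant is kept, and the interval length goes from $n$ to at most $\lceil n/2\rceil$. After $\lceil\log_2 D\rceil$ queries it reaches $1$, giving $j=lo$ with $F(\pi_j)\neq F(\pi_{j+1})$. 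This step uses only equality of observations, so it does not need $F$ to be monotone along the path.

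\emph{Step 3: the edge.} The layouts $\pi_j$ and $\pi_{j+1}$ differ by the swap of the adjacent sprites $s,t$, and they have different outcomes, so they lie in different blocks of $Q=\ker F$. By the definition of $\alpha$, this means $\{s,t\}\in\alpha(\ker F)$. Now let $G$ be any sound graph, meaning $F$ is unchanged by every adjacent swap outside $G$ (equivalently $\gamma(G)\preceq\ker F$). Theorem~\ref{thm:alpha} gives $\alpha(\ker F)\subseteq G$, so $\{s,t\}\in G$. The same conclusion also follows directly: if $\{s,t\}\notin G$, soundness would force $F(\pi_j)=F(\pi_{j+1})$.

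The main obstacle is not the combinatorics but pinning down what ``runs'' and $F$ mean so that Step 2 is legitimate. The premise of Theorem~\ref{thm:realizable} is needed so that every intermediate layout has a deterministic completed run; I would either add that as a hypothesis or fold a failed run into the search as a distinct outcome. Care is also needed in the counting: for $D=1$ the bound gives zero extra runs, and that case is handled directly by the endpoints.
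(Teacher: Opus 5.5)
Your proposal is correct and follows the paper's proof essentially step for step. Both use a path of adjacent swaps that each correct exactly one inverted pair: the paper moves the sprites of $\pi'$ into place left to right, and you remove arbitrary adjacent inversions, but either way there are exactly $D$ swaps. Both then bisect while keeping an interval whose endpoint observations differ, which works even when the midpoint shows a third outcome, and both read the edge off the definition of $\alpha$ and Theorem~\ref{thm:alpha}. Your remarks on $D=1$ and on needing completed runs at intermediate layouts are sound; the paper covers the latter through the premises of Theorem~\ref{thm:realizable}.
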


\begin{proof}
Read $\pi'$ left to right and move each sprite to its place by adjacent swaps; each swap corrects exactly one inverted pair, giving $D$ swaps, and no shorter path exists since a swap changes one pair. Keep $u < v$ with $F(\pi_u) \neq F(\pi_v)$; query the midpoint $m$ and keep the half whose endpoints still differ, which exists even if the midpoint shows a third outcome. The interval halves each time. The last claim is the definition of $\alpha$ and Theorem~\ref{thm:alpha}.
\end{proof}

The localized pair certifies an intervention in context, one of possibly several racing pairs: swapping these two sprites in this layout changes the observation. Note that it does not say the swap changes every layout, and that intersecting the pair's footprints names candidate resources rather than a proven cause. In the exhaustive prevalence tables every midpoint is already known, so localization costs nothing there; for requirement flips and beyond the exhaustive range, \sys{} runs the bisection on the two witness layouts. The resources the pair shares classify the sensitivity under the five labels of Section~\ref{sec:study}, and for a learner they are the part of the verdict to act on.

\subsection{Implementation}

The exploration, the lenses, the footprint extractor, the reduction, and the attribution are about four thousand lines of JavaScript and Python. A regression suite of \nBenchPrograms{} labeled programs (hand-built control-and-mutant pairs and a generated benchmark with intended values) passes on every verdict in both configurations.

\section{Study Design}
\label{sec:study}

We ask five questions about the phenomenon and answer each by running \sys{} on real programs.

\begin{itemize}
\item \textbf{RQ1 (Prevalence).} How many real programs behave differently under some realizable stacking order, at which lens, and over which horizon?
\item \textbf{RQ2 (Severity).} Within a sensitive program, what fraction of the realizable orders produce the other behavior, and how many random re-runs would it take to notice?
\item \textbf{RQ3 (Consequence).} Does the stacking order change whether a student's program meets the requirements of its assignment?
\item \textbf{RQ4 (Resources).} Which resources do the racing sprites share, and how often is the random stream their highest-priority shared resource?
\item \textbf{RQ5 (Comparison).} Do a static bug-pattern checker and random re-execution, the two checks a teacher could apply without a model of the scheduler, see what \sys{} sees?
\end{itemize}

\subsection{Corpora}

We use three corpora of real programs; no program was written for this study. A program has a \emph{scheduling choice} when at least two sprites are in the activation closure of Section~\ref{sec:tool} under the fixed input ($k \geq 2$).

\runin{Course assignment set} Student solutions to \nCourseAssignments{} assignments of an introductory Scratch course, each a small game or animation with a written task description (for example, a shark that eats fish, a basket that catches eggs, a sight that shoots bats). Each submission is one student's complete solution to one assignment, most with two to five contributing sprites. We use the \nCourseTotal{} distinct programs (byte-identical copies counted once), \nCourseConc{} of which have a scheduling choice; the set serves RQ1 to RQ5.

\runin{Public random sample} \nPublicTotal{} projects drawn uniformly at random from the identifier space of the public Scratch website in June 2026, keeping every identifier that resolved to a shared project. The sample was rebuilt with each project's original costume and sound assets, so that costume geometry, which touch tests depend on, is the real one. Projects that reference extensions the headless VM cannot run (music, speech, translation, video) had those blocks replaced by no-ops (\nPublicStubExtstrip{} of the \nPublicTotal{} projects). \nPublicLoaded{} projects load, \nPublicConc{} of them with a scheduling choice; the sample serves RQ1, RQ2, RQ4, and RQ5.

\runin{Online-judge submissions} \nJjcTotal{} submissions by \nJjcUsers{} learners to \nJjcProblems{} problems of a Scratch online judge for school-age students, used for RQ3 because each problem comes with a precise statement of requirements. The \nJjcProblemsUsed{} problems with at least four submissions that have a scheduling choice enter the analysis: \nJjcStudied{} submissions by \nJjcStudiedUsers{} learners, of which \nJjcKtwo{} (by \nJjcKtwoUsers{} learners) have a scheduling choice. A learner may submit several times, and RQ3 counts submissions.

\subsection{Procedure}

\runin{Input and horizon} For RQ1, RQ2, RQ4, and RQ5 the input is the green flag alone. The horizon is 30 ticks (one second at 30 frames per second), the window in which a program's initialization runs; we repeat RQ1 at 90 and 300 ticks on the course set and with two further random seeds. For RQ3 the input is a scripted interaction per assignment: key presses and clicks on a fixed timeline, and an answer to every question the program asks. The horizon is a play length chosen per assignment, between 8 and 35 seconds.

\runin{Verdicts} We explore a program in one of three ways. (i) With $k \le 5$ contributing sprites we run every order. (ii) With $k > 5$ we run one order per class of the conditional reduction when the conflict graph admits at most 1{,}024 classes; for a complete graph this is again every order. (iii) Otherwise we run a sample of five orders and report the verdict as a lower bound. A program whose every order ran, under (i) or under (ii) with a complete graph, has a \emph{complete table}; RQ2 and the audits use complete tables. Every program is run twice under the identity order, and a program whose two runs differ is excluded as nondeterministic. Headline numbers use the final-state lens and the bounding-box configuration; we repeat the prevalence runs and both requirement studies in the pixel configuration and report where the two disagree (RQ1, RQ3).

\runin{Statistics} We report each headline rate with a Wilson 95\% interval. For the public sample it describes sampling from the site's projects; the course and online-judge sets are complete populations, for which the interval is a reference for reading the set as a sample of similar student work, and it does not account for clustering by learner and assignment.

\runin{Requirement predicates (RQ3)} For every assignment and problem with at least two submissions that have a scheduling choice (\nReqProblems{} in all: 12 course assignments and the 7 online-judge problems) we wrote between three and six requirement predicates. Each predicate is an operational reading of one sentence of the task description and maps the recorded run to \emph{holds}, \emph{fails}, or \emph{undetermined}. A conditional requirement (a hit sprite says \emph{ah}) is evaluated only when its condition is observed in the run. A requirement whose timing the statement fixes (the completion message comes after the last virus) fails when its consequence comes early. A requirement whose condition the trace cannot observe (a contact between sprites) can be confirmed by one observed consequence but not refuted. Of the \nReqPredicates{} predicates, \nReqPredConfirmOnly{} are of this kind or fix a timing, \nReqPredGated{} return undetermined when their observable condition does not occur, and \nReqPredUnconditional{} are decided in every completed run. The undetermined value therefore covers a condition that did not occur, a condition the trace cannot observe, and an unresolved role alike. For example, ``the robot moves when an arrow key is pressed'' becomes a predicate over the robot's position during the key-press windows. Sprites are bound to the roles a statement names (the robot, the basket) from the program itself: the sprite that reads the arrow keys is the player, and the sprite that creates clones is the spawner. Section~\ref{sec:threats} discusses the predicates' validity. A submission has a \emph{requirement flip} when some predicate holds under one realizable order and fails under another, and a \emph{determinacy change} when some predicate is determinate under one order and undetermined under another.

\runin{Attribution (RQ4)} We attribute every sensitive program to a racing pair and its shared resources as described in Section~\ref{sec:tool}, and label it by the highest-priority resource type in the pair's shared footprint. In an exhaustive table the pair is the one whose adjacent swap flips the outcome most often, ties broken by saved order; beyond it, the bisection runs between the two orders of the recorded witness. The labels are five. (i) \emph{Shared variable}: the diverging observation includes a variable the pair contends for, or one sprite sets a variable the other reads. (ii) \emph{Sensed pose}: the pair's conflict includes a touch test or a read of the other sprite's pose. (iii) \emph{Clone family} and (iv) \emph{broadcast channel}: the pair contends for one. (v) \emph{Random stream}: the random stream is the pair's highest-priority shared resource. A pair labeled by the random stream may also share a counter that both increment, the timer, or an unmodeled block; Section~\ref{sec:results} reports how often.

\runin{Baselines (RQ2, RQ5)} We compute the random baseline exactly from the exhaustive tables. For a sensitive program with outcome classes of sizes $c_1,\dots,c_m$ among $k!$ orders and a budget $N$, let $n = \min(N, k!)$; sampling $n$ distinct orders misses the difference with probability $\sum_i \binom{c_i}{n} / \binom{k!}{n}$. The static baseline is LitterBox~\cite{fraser2021litterbox} version 1.12 with all bug-pattern detectors enabled, scored (i) on any detector and (ii) on the subset of detectors that concern initialization, messaging, and cloning.

\section{Results}
\label{sec:results}

Each subsection answers one question of Section~\ref{sec:study} and ends with its answer.

\subsection{RQ1: How common is it?}

\begin{table}[t]
\centering
\caption{Prevalence of schedule-sensitivity under the green flag over 30 ticks. About one program in four with a scheduling choice in the course set and one in five in the public sample changes its final state under some realizable stacking order, and finer lenses add few programs. \emph{Reduced $=$ exhaustive}: enumerable projects on which the reduction reproduced the outcome set of the full enumeration (it never missed one). \emph{Reduction saved runs}: enumerable projects with fewer than $k!$ classes.}
\label{tab:prevalence}
{\tblfont\setlength{\tabcolsep}{\tblsep}\begin{tabular}{@{}>{\raggedright\arraybackslash}p{\tblLabelWide}rr@{}}
\toprule
 & Course & Public \\
\midrule
Projects & 145 & 250 \\
Loaded and repeatable & 145 & 249 \\
With a scheduling choice ($k\geq 2$) & 121 & 121 \\
Exhaustive or reduced verdict & 109 & 97 \\
Sampled lower bound & 12 & 24 \\
Schedule-sensitive, final-state lens & 32 (26.4\%) & 22 (18.2\%) \\
\quad 95\% Wilson interval & 19.4--34.9\% & 12.3--26.0\% \\
\quad frame lens & 38 (31.4\%) & 27 (22.3\%) \\
\quad monitor lens & 38 (31.4\%) & 28 (23.1\%) \\
\quad full-trace lens & 38 (31.4\%) & 28 (23.1\%) \\
Reduced $=$ exhaustive & 101/101 & 81/81 \\
Reduction saved runs & 13/101 & 45/81 \\
\bottomrule
\end{tabular}
}
\end{table}

Table~\ref{tab:prevalence} gives the headline numbers. Of the \nCourseTotal{} course programs, \nCourseConc{} have a scheduling choice; the rest have at most one contributing sprite, hence a single layout, and are robust by construction. Of these \nCourseConc{}, \nCourseSens{} (\pCourseSens\%; 95\% CI \ciCourseSens\%) end in a different final state under some realizable stacking order. In the public sample, \nPublicConc{} of \nPublicLoaded{} loadable projects have a scheduling choice and \nPublicSens{} (\pPublicSens\%; 95\% CI \ciPublicSens\%) are sensitive. Among the programs with a scheduling choice, \nCourseExact{} course programs and \nPublicExact{} public projects received an exhaustive or conditionally reduced verdict. Every robust course verdict has a complete table, and \nPublicReducedOnlyRobust{} public robust verdicts rest on the reduction alone. The remaining \nCourseSampled{} course programs and \nPublicSampled{} public projects have conflict graphs too dense to enumerate. We sampled five orders each; \nCourseSampledConcSens{} and \nPublicSampledSens{} of them were already sensitive, and the others are unresolved rather than robust, so their contribution to the rate is a lower bound. Of the \nPublicStubExtstrip{} public projects whose extension blocks were replaced by no-ops, \nPublicExtstripConc{} have a scheduling choice and \nPublicExtstripSens{} is sensitive. Without them the public rate is \nPublicSensNoExtstrip{} of \nPublicConcNoExtstrip{}.

\runin{Lenses} We observe that finer lenses add few programs: \nCourseSensFull{} course programs and \nPublicSensFull{} public projects differ at the full-trace lens, against \nCourseSens{} and \nPublicSens{} at the final-state lens.

\runin{Horizon and seeds} At 90 and 300 ticks the count of sensitive course programs moves from \nCourseSens{} to \nCourseSensNinety{} (95\% CI \ciCourseSensNinety\%) and \nCourseSensThreeH{} (\ciCourseSensThreeH\%), from \pCourseSens\% to \pCourseSensThreeH\% of the concurrent programs; the reduction agreed with enumeration at every horizon. Two other random seeds give \nCourseSensSeedTwo{} and \nCourseSensSeedThree{} sensitive programs against \nCourseSens{}. \nCourseSeedAgree{} of the \nCourseConc{} concurrent programs keep their verdict under all three seeds.

\runin{Certificates} The reduction reproduced the outcome set of the enumeration on every enumerable project (\nCourseZeroMiss{} course, \nPublicZeroMiss{} public) at all four lenses, and its final-lens classes were homogeneous on every one (Section~\ref{sec:tool}). It cut the number of runs below $k!$ on \nCourseReducedBelow{} of \nCourseReducible{} enumerable course programs and \nPublicReducedBelow{} of \nPublicReducible{} public projects. The median factor over all enumerable projects, including those with no saving, is \xCourseRedMedian{} and \xPublicRedMedian{}. The two prevalence studies cost \nRunsTotal{} VM runs. With two VM processes per program on one workstation, the median concurrent course program takes \xCostCourseMedianS{} seconds of wall time at 30 ticks and \xCostCourseThreeHMedianS{} at 300; the slowest, a six-sprite program with a complete conflict graph, takes \xCostCourseMaxS{} seconds.

\runin{Bounding boxes versus pixels} We repeated the runs in the pixel configuration of Section~\ref{sec:tool}. Of the projects the bounding-box configuration trusts, it completed all but the one it refuses. On all \nLaneAllCompared{} projects both configurations completed (\nLaneAllFullCompared{} of them with a scheduling choice), the final-state verdicts at 30 ticks agree, and the same \nLaneCourseNewSens{} course programs and \nLanePublicNewSens{} public projects are sensitive under both. At the full-trace lens the two differ on \nLaneAllFullChanged{} of \nLaneAllFullCompared{}. Over longer play they differ on \nLaneCourseNinetyChanged{} of \nLaneCourseNinetyCompared{} course verdicts at 90 ticks and \nLaneCourseThreeHChanged{} of \nLaneCourseThreeHCompared{} at 300, every one from robust under bounding boxes to sensitive under pixels.

\takeaway{Answer to RQ1.}{In the 30-tick green-flag experiment, about one concurrent program in four (course) and one in five (public) is schedule-sensitive at the final-state lens, a lower bound while sampled projects stay unresolved. In the course set the rate grows with play time, finer lenses add few programs, and the reduction agreed with enumeration on every program where both were run.}

\subsection{RQ2: How severe is it?}

A program can be sensitive because one order in a hundred and twenty produces another outcome, or because half of them do. The complete tables answer this exactly. We find that \nBinaryCourse{} of the \nDetectCourseN{} sensitive course programs with a complete table have exactly two outcome classes of half the orders each, a coin flip that the saved layout decides. The rest have up to 120 outcomes, and their saved layout's outcome recurs under as few as one order in a hundred. The median number of outcomes is \xMedOutcomesCourse{}, and the class of the median program's saved layout holds \xMedSavedClassCourse\% of the orders (smallest class \xMedMinClassCourse\%). In contrast, in the public sample \nBinaryPublic{} of \nDetectPublicN{} are coin flips, and the medians are \xMedOutcomesPublic{} outcomes and a saved class of \xMedSavedClassPublic\%.

The consequence for testing is two-sided. On the one hand, because the split is coarse, a tester who runs a sensitive program under up to two distinct random layouts notices the difference with probability \pDetectTwoCourse\%, averaged over the \nDetectCourseN{} sensitive course programs with a complete table (\nDetectCourseTwoSprite{} of which have only two layouts); under three layouts the probability is \pDetectThreeCourse\%, and under five nearly one. On the other hand, what random re-running cannot do is certify the \nCourseExactRobust{} course programs that complete exploration finds robust: agreement on a sample, without verified coverage, does not prove that no order disagrees.

\takeaway{Answer to RQ2.}{Among the programs with a complete table, a sensitive one reproduces its saved outcome under a median \xMedSavedClassCourse\% (course) and \xMedSavedClassPublic\% (public) of its orders, so two or three random distinct layouts expose it. The complete exploration is what turns the certified majority from ``not observed to fail'' into ``no admissible layout changes the outcome.''}

\subsection{RQ3: Does it change whether a program meets its assignment?}

\begin{table}[t]
\centering
\caption{Requirement flips under the assignments' scripted inputs. A submission flips when a requirement predicate written from the assignment statement holds under one realizable stacking order and fails under another. It has a determinacy change when a predicate is determinate under one order and undetermined under another. The last two rows give the share of the flipped submissions that each lens flags.}
\label{tab:reqflip}
{\tblfont\setlength{\tabcolsep}{\tblsep}\begin{tabular}{@{}>{\raggedright\arraybackslash}p{\tblLabelWide}rr@{}}
\toprule
 & Course set & Online judge \\
\midrule
Submissions with a scheduling choice ($k\geq 2$) & 128 & 299 \\
Requirement flips & 0 (0.0\%) & 5 (1.7\%) \\
\quad 95\% Wilson interval & 0.0--2.9\% & 0.7--3.9\% \\
\quad pair shares program state & 0 & 3 \\
\quad pair shares the random stream first & 0 & 2 \\
Determinacy changes only & 10 & 7 \\
Flips flagged, final-state lens & -- & 100\% \\
Flips flagged, full-trace lens & -- & 100\% \\
\bottomrule
\end{tabular}
}
\end{table}

We report the requirement study in Table~\ref{tab:reqflip}. Among the \nSfKtwo{} course submissions with a scheduling choice, some predicate holds under one stacking order and fails under another for \nSfFlip{} of them (\pSfFlip\%; 95\% CI \ciSfFlip\%). Among the \nJjcKtwo{} online-judge submissions, the same is true for \nJjcFlip{} (\pJjcFlip\%; 95\% CI \ciJjcFlip\%). A further \nSfCovFlip{} course and \nJjcCovFlip{} online-judge submissions have a determinacy change without a requirement flip: a predicate is determinate under one order and undetermined under another, most often a contact-conditioned requirement whose consequence appears under one order only. The pixel configuration gives the same flip verdict on \nLaneSfFlipAgree{} of \nLaneSfBoth{} course and \nLaneJjcFlipAgree{} of \nLaneJjcBoth{} online-judge submissions; the \nLaneReqFlipLost{} disagreements are D and D$'$ below, whose rocket touches a rock under pixels in every sampled order. We find that for each flipped submission at least one requirement judgment depends on the saved stacking order; \nReqFlipOtherFail{} of the \nReqFlipAll{} also fail another requirement under every explored order, so the overall grade need not change.

We read all five flips by hand. (A) In the robot problem, with the virus in front the viruses appear and the robot moves under the keys, and with the robot in front neither holds. (B) In a second submission to the same problem, the completion message comes after the last virus with the virus in front and at the first tick with the robot in front. (C) In the rocket problem, with the rock in front rocks keep spawning and two are visible at once, and with the rocket in front neither holds. (D, D$'$) In one learner's two revisions of the rocket game, an outcome message appears when rock~1 precedes rock~4, the adjacent pair certified by bisection, and not when the two are swapped, where the rocket crosses the field and says nothing; this predicate holds when any sprite says an outcome word, fails when the rocket has travelled to the right edge without one, and is undetermined otherwise. A and C are the cases of Section~\ref{sec:implications}, read from the code and both runs; B, D, and D$'$ we verified from the traces of both runs. Attributing each flip to its racing pair, we find two situations. In \nReqFlipLogicAll{} of the \nReqFlipAll{} flips the pair shares program state: a score variable that one sprite resets and the other tests, or a clone family that one sprite creates and the other tests for contact. These are races on program state, and Section~\ref{sec:implications} reads two of them off the code. In the other \nReqFlipRandomAll{} the pair's highest-priority shared resource is the random stream: the order decides where randomly placed sprites start, and with them whether the scripted play reaches the requirement's consequence in time. Under the identity order \nReqIdentTrue{} requirement evaluations hold, \nReqIdentFalse{} fail, and \nReqIdentUnknown{} are undetermined.

\takeaway{Answer to RQ3.}{For \pReqFlipAll\% of the student submissions with a scheduling choice, a predicate written from a stated requirement holds under one stacking order and fails under another. The certified pair shares program state in \nReqFlipLogicAll{} of the \nReqFlipAll{} flips and the random stream first of all in \nReqFlipRandomAll{}; \nLaneJjcNewFlip{} of the \nReqFlipAll{} also flip under pixels. For \nReqCovFlipAll{} further submissions, a predicate's determinacy differs between orders. The final-state lens flagged every flipped submission in the same runs; of the submissions it flags, \rSfFlipGivenFlag\% (course) and \rJjcFlipGivenFlag\% (online judge) flip a requirement.}

\subsection{RQ4: What do the racing pairs contend for?}

\begin{table}[t]
\centering
\caption{Attribution of every sensitive program: one certified racing pair per program, labeled by the highest-priority resource its two footprints share. \emph{Shared variable}: one sprite assigns it and another reads or updates it, or the outcomes differ in it. \emph{Sensed pose}: one sprite tests or reads the other's pose. \emph{Clone family} and \emph{broadcast channel}: one sprite creates or sends, the other tests or receives. \emph{Random stream}: the pair's highest-priority shared resource is the random stream (the text reports what else those pairs share). We count each program once, by the most frequent flipping pair of its exhaustive table or by its bisection-certified pair.}
\label{tab:patterns}
{\tblfont\setlength{\tabcolsep}{\tblsep}\begin{tabular}{@{}>{\raggedright\arraybackslash}p{\tblLabelWide}rrr@{}}
\toprule
Highest-priority shared resource & Course & Public & All \\
\midrule
Shared variable & 7 & 7 & 14 \\
Sensed pose & 4 & 4 & 8 \\
Clone family & 0 & 1 & 1 \\
Broadcast channel & 1 & 3 & 4 \\
Random stream (highest priority) & 20 & 7 & 27 \\
Other (timer, pen, unmodeled block) & 0 & 0 & 0 \\
\midrule
Sensitive programs attributed & 32 & 22 & 54 \\
\bottomrule
\end{tabular}
}
\end{table}

We classify every sensitive program by the highest-priority resource its certified racing pair shares (Table~\ref{tab:patterns}). We read the pair off the exhaustive table where one exists and certify it by the bisection of Theorem~\ref{thm:localize} otherwise. Bisection certified an adjacent pair for all \nLocPrevRows{} sensitive prevalence verdicts not read off an enumerated table (\nLocPrevRuns{} runs) and for the \nLocReqRows{} requirement flips beyond it (\nLocReqRuns{} runs). In \nLocReqTruthEdge{} of the \nLocReqAllRows{} requirement flips the certified swap changes the predicate between holds and fails; in \nLocReqDetEdge{} it changes determinacy instead. No certified pair had conflict-free footprints, which would have exposed a gap in the footprint model. Four labels name program state. A \emph{shared variable} that one sprite assigns and another reads or updates, the resource of the introduction's robot, is the pair's highest-priority shared resource in \nPatVarAll{} programs. A \emph{sensed pose}, which one sprite tests for contact or reads while the other sprite moves it, is in \nPatSenseAll{}. A \emph{clone family} that one sprite creates and the other tests or creates is in \nPatCloneAll{}, and a \emph{broadcast channel} that one sprite sends on and the other receives is in \nPatBcastAll{}. Each label names the state the two footprints contend for; the mechanism behind a particular program we read only for the two cases of Section~\ref{sec:implications}. The fifth label is the random stream, the highest-priority shared resource of the certified pair in \nPatRandAll{} programs (\pPatRandAll\%; 95\% CI \ciPatRandAll\%); the swap can change which sprite draws first, a candidate cause rather than a proven one. We note that the label is a priority, not an exclusion: \nPatRandPure{} of these pairs share nothing else, \nPatRandWithCounter{} also share a counter that both increment, \nPatRandWithTimer{} the timer, and \nPatRandWithOpaque{} a block the footprint table does not model.

The table counts one pair per program with that pair's highest-priority label, so its distribution depends on that selection. The random label is more common beyond the complete tables (\nPatRandNonexhAll{} of \nSensNonexhAll{} programs) than within them (\nPatRandExhAll{} of \nSensExhAll{}), and program size and the pair-selection rule vary together there. In the course set the random label dominates (\nPatRandCourse{} of \nCourseSens{}), consistent with assignments that spawn families of sibling sprites that each draw a starting position.

\takeaway{Answer to RQ4.}{The selected pairs split evenly: \pPatLogicAll\% share program state first of all (a variable, a sensed pose, a clone family, or a broadcast channel) and \pPatRandAll\% the random stream. A checker's report should say which.}

\subsection{RQ5: Do existing analyses see it?}

We compare \sys{} with a static checker and with random re-execution, scored against \sys{}'s verdict on the programs with a resolved one (\nLbPosCourse{} witnessed sensitive and \nLbNegCourse{} robust course programs, the robust ones all with complete tables; \nLbPosPublic{} sensitive and \nLbNegPublic{} robust public projects, \nPublicReducedOnlyRobust{} of the latter by the conditional reduction). LitterBox, run with all of its bug-pattern detectors, flags \rLbRecAnyCourse\% of the sensitive course programs and \rLbRecAnyPublic\% of the sensitive public projects. However, it flags a similar share of the robust ones, for a precision of \rLbPrecAnyCourse\% and \rLbPrecAnyPublic\% (\rLbPrecAnyExhPublic\% with the public negatives restricted to complete tables). Restricted to its detectors for initialization, messaging, and cloning (the shapes closest to the four program-state labels), recall falls to \rLbRecConcCourse\% and \rLbRecConcPublic\% at a precision of \rLbPrecConcCourse\% and \rLbPrecConcPublic\%. The static patterns and the dynamic verdict are views of different things. For example, a missing initialization is a race only if another sprite can read the variable before the initialization runs, and a race often involves blocks no smell pattern names. We note that LitterBox's detectors also target other bug classes, so its precision here measures how well its warnings discriminate schedule-sensitivity, not its false-alarm rate on its own task. Two uniformly sampled distinct layouts expose a sensitive program with mean probability \pDetectTwoCourse\% over the sensitive course programs with a complete table and \pDetectTwoPublic\% over the public ones, computed from those tables; a disagreement is a witness and never a false alarm, and agreement short of full coverage certifies nothing.

\takeaway{Answer to RQ5.}{Static warnings see most sensitive programs but separate them from the robust majority poorly. Random and systematic exploration alike certify robustness only with verified coverage of the domain or of a sound quotient of it.}

\section{Two Cases, and What To Do About Them}
\label{sec:implications}

\runin{Two requirement flips, read from the code} Both cases are online-judge submissions from RQ3 (Table~\ref{tab:reqflip}); we read the mechanisms off the blocks and the witness runs. Both turn on a fact about Scratch files that compounds the scheduling one: a saved project stores the current value of every variable and the current position of every sprite. Whatever the last run left behind is the state the next run starts from.

The first is the robot of Figure~\ref{fig:robot}, whose mechanism Section~\ref{sec:background} explains. Under the robot-first order two requirements flip: the viruses never appear, and the robot does not respond to the arrow keys. The completion message, which the statement ties to the elimination of all viruses, is spoken with nothing eliminated, which the predicate counts as a failure. Under the other order the scripted play does not reach the end of the game, so there the requirement is undetermined. \nJjcPrematureDone{} further submissions to this problem announce completion prematurely under one order in the same way.

The second problem asks for a rocket that crosses an asteroid field: a rock sprite moves to the right edge, hides, and spawns clones that drift left, and the rocket says \emph{failure} and stops if it touches a rock. In one submission the file stores the original rock sprite near the left edge, at the rocket's starting position, where a previous run left it. With the rock in front, its script moves it to the right edge and hides it before the rocket's first contact test. With the rocket in front, the rocket's first loop iteration tests contact against a rock that has not moved yet, finds it, says \emph{failure} for two seconds, and then stops everything. By then the rock's other script has created one clone, at the one-second mark, and no more. The requirements that rocks keep spawning and that at least two be visible at once hold under one order and fail under the other.

In both cases \sys{}'s attribution names the racing pair, the two layouts, and the resources the pair's footprints share. These are the score variable and the robot's position in the first case, and the rock's position and its clone family in the second. The fix in each case is one block: the robot's script sets the score to zero before its loop, and the rocket's script waits one frame before its first contact test. Under the same exploration the repaired robot game is robust. The repaired rocket game meets the same requirements under both orders. However, its final state still differs, because the two orders hand different random draws to the rocks. Both submissions avoid the illustrated startup failure under their saved order.

\runin{For learners} The concept a learner needs is small. Sprites take turns, the front one goes first, and a script should not assume that another sprite's script has already run. Two habits follow: (i) initialize shared state in one script and start the scripts that depend on it from that script, with a broadcast after the initialization; and (ii) do not test contact with another sprite, or read its position, before that sprite's own script has placed it.

\runin{For teachers and automated graders} Any grading that runs a project runs it under the one stacking order saved in the file. RQ3 says that under scripted play the saved order decides at least one requirement predicate for \pReqFlipAll\% of submissions with a scheduling choice (\nLaneJjcNewFlip{} of the \nReqFlipAll{} also under pixel sensing). Three measures follow, in increasing cost. (i) Record the stacking order the grader ran under, so that a disagreement with the student is diagnosable. (ii) Run the project under a handful of distinct orders, which by RQ2 exposes most sensitive programs with a complete table, and treat a disagreement as a reason to look rather than as a verdict. (iii) Run \sys{}, which certifies the enumerable majority, exhaustively or conditionally on the graph, and for a sensitive submission a pair whose swap changes the outcome.

\runin{For the Scratch platform} The mechanism is a design choice of the platform, and the platform can change it. One change is documentation: state that scripts start in stacking order and that dragging a sprite changes that order. A second is an indication in the editor when a project's behavior depends on the order, which the exploration of this paper can compute for the enumerable majority; its latency and usefulness inside an editor remain to be evaluated. The most thorough change is to make the start order independent of the drawing order, for instance by starting sprites in the order in which they were created.

\section{Limitations and Threats to Validity}
\label{sec:threats}

\runin{What a witness is} Theorem~\ref{thm:realizable} ties every explored order to a stacking order, and the harness realizes an order by rewriting layer numbers and running the unmodified VM. Every witness is therefore two saved layouts run by the VM in the stated configuration. The reduction's soundness depends on the adequacy condition of Section~\ref{sec:tool}, which we have not proved for the extractor in general. A sprite with an unmodeled block is marked opaque and conflicts with every other sprite, so such a block disables reduction rather than enabling it. Which sprites the domain includes rests on the extractor's activation rules for broadcasts, backdrop switches, clones, and timers. A sprite those rules miss is outside every certificate. Every enumerable instance passes the class-homogeneity check at the final-state lens and the 30-tick horizon (all \nRecheckTables{} exhaustive tables) and the outcome-set agreement between reduction and enumeration at all four lenses. Every certificate speaks about the layer coordinate of the contributing sprites alone: silent sprites, the script order inside each sprite, saved values, assets, input policy, seed, horizon, and sensing configuration are held fixed. A sampled project without a witness is unresolved, not robust, and we report the \nSampledAll{} sampled verdicts as lower bounds.

\runin{Headless execution} The bounding-box configuration runs without a renderer: its stand-in takes sprite geometry, bounds, and fencing from the real costume assets (Section~\ref{sec:tool}). However, a touch test between sprites is decided by costume bounding boxes, and a colour-touch test always fails. We compared the two sensing configurations on the prevalence and requirement experiments (RQ1, RQ3). The two agree on every 30-tick final-state verdict; they differ on a handful of verdicts over longer play and on \nLaneSfFlipChanged{} of \nLaneSfBoth{} course and \nLaneJjcFlipChanged{} of \nLaneJjcBoth{} online-judge flip verdicts. However, the two configurations share one runtime, one recording, and one input policy, which the comparison cannot check. The refused project remains unchecked, and a pixel decision at a costume's edge can differ between renderer builds; every pixel run used one machine and one build.

\runin{Input and horizon} We measure prevalence under the green flag alone, over one second and under one random seed, with three and ten seconds and two further seeds as controls on the course set (RQ1). Sensitivity that only a specific interaction exposes is not counted in RQ1; RQ3 measures it under scripted interactions, where our script realizes one reading of the statement. Endpoint sensitivity need not grow with the horizon (Section~\ref{sec:model}); in the course set no program lost a sensitive verdict at a longer horizon (\nHorizonLost{} of \nHorizonCompared{}).

\runin{Requirement predicates} The predicates of RQ3 are our reading of the assignment statements: they cover selected requirements, infer roles from simple static cues, and can misjudge an individual submission. A contact-conditioned requirement (Section~\ref{sec:study}) never flips, and a flip elsewhere in the same run may rest on a contact that one order produced and the other did not. The undetermined value covers unobserved conditions, unobservable conditions, and unresolved roles alike, so a determinacy change is not evidence that a condition did not occur. A systematic misreading of a statement changes which submissions flip, in either direction. The validity of the predicates is therefore a threat of its own. The hand checks of Section~\ref{sec:results} address it for the five flips, and the identity-order values (\pReqIdentTrue\% of the determinate evaluations hold, \nReqIdentUnknown{} are undetermined) describe the predicates' behavior without validating them.

\runin{Attribution} We label the racing pairs of RQ4 by the resource types in the racing pair's footprints and a fixed priority among them. A pair that contends for both a shared variable and the random stream is labeled by the variable in two cases: (i) when the diverging observation involves that variable, or (ii) when one sprite assigns the variable and the other reads or updates it. When both sprites only increment it, or both reset it to the same constant and otherwise only increment it, we label the pair by the random stream, unless case (i) applies; it does for \nPatVarWeakAll{} public pairs whose shared counter is among the diverging observations. The choice of pair matters more than the rule. Bisection certifies that swapping the selected pair changes the observation in one layout, not that this pair stands for the program's other racing pairs. The label distribution therefore depends on the selection rule, and Section~\ref{sec:results} lists what else the selected pairs share.

\runin{Populations} The course set is one course's assignments (all small games with families of sibling sprites), which is a plausible reason for its many random-stream labels. The public sample is uniform over successfully retrieved projects in the sampled identifier range. In contrast to the course set, it represents projects rather than users, active projects, or classroom work. A learner or an assignment contributes several programs to the course and judge sets, which the binomial intervals ignore. All results hold for one version of the production VM, and a change to how it walks its target list would require running the study again.

\runin{Data use} The student programs are used as program artifacts with permission; the study observes program behavior, not students. No student, course, or platform is identified, and the programs are not redistributed.

\section{Related Work}
\label{sec:related}

\runin{Analyses of Scratch programs} The Scratch repository is large enough to study at scale~\cite{aivaloglou2016kids}, and a body of tools analyzes learners' projects. Static checkers flag bug patterns and code smells in the block graph~\cite{fraser2021litterbox,boe2013hairball,morenoleon2015drscratch,fraedrich2020bugs}. Test generators synthesize inputs and check properties one execution at a time~\cite{stahlbauer2019whisker,deiner2023testgen,goetz2022model}. Further systems verify learner programs, generate hints, debug, and grade homework~\cite{stahlbauer2020verified,obermueller2021catnip,deiner2024nuzzlebug,johnson2016itch}. To our knowledge, none of them varies the stacking order; each runs, or reasons about, the order saved in the file. In contrast, the phenomenon of this paper lives in the dimension they hold fixed, and RQ5 finds that static warnings separate the sensitive programs from the robust ones poorly. The comparison of two Scratch programs through a lattice of observation lenses~\cite{scratchlens} supplies the lenses we use to compare one program with itself across its schedules.

\runin{Concurrency testing} Stateless model checking explores the interleavings of a concurrent program, and partial-order reduction prunes interleavings that commute~\cite{godefroid1996partial,flanagan2005dpor,abdulla2014optimal}, resting on Mazurkiewicz trace equivalence~\cite{mazurkiewicz1987trace} and on the state-explosion problem it addresses~\cite{valmari1998state}. Systematic testing of reactive systems explores message and callback orders~\cite{desai2015pingpong}. Our setting is narrower. In our fixed environment the initial layout is the parameter varied between deterministic executions, so the space is a finite symmetric group rather than an interleaving tree, and a complete exploration is feasible for the small projects of this study. The reduction we use is a static, depth-one instance of these ideas; it buys a certificate at the price of a conflict graph that we must assume sound beyond five sprites. Schedule-sensitivity is also a cousin of flaky and order-dependent tests, whose outcome depends on something the harness leaves unspecified~\cite{luo2014flaky,lam2019idflakies,gligoric2015nondex}.

\runin{Determinacy and program equivalence} Bernstein's conditions give the classical sufficient condition for two tasks to commute~\cite{bernstein1966}; robustness at the full-trace lens is the corresponding observational property. The footprint model is an abstraction in the sense of abstract interpretation~\cite{cousot1977abstract}; reduced-only verdicts assume its adequacy, and enumerable cases receive the class-homogeneity check of Section~\ref{sec:tool}. Translation validation, regression verification, differential symbolic execution, and semantic diffing decide whether two programs agree~\cite{pnueli1998translation,godlin2009regression,person2008differential,lahiri2012symdiff,ramos2011equivalence,jackson1994semanticdiff}. In contrast, we ask whether one program agrees with itself under a change the language treats as cosmetic.

\section{Conclusion}
\label{sec:conclusion}

A Scratch program's scripts start in the order in which its sprites are stacked, and a drag changes that order. The schedules this produces are the permutations of the initial stacking order, a finite space that a checker can enumerate for small programs, reduce by a conflict graph for larger ones, and sample beyond. In real programs the order decides the final state of about one concurrent program in four (course) and one in five (public) and, among those with a complete table, at most half of the orders typically reproduce the saved outcome. It decides a requirement predicate for \pReqFlipAll\% of student submissions with a scheduling choice, and half of the certified racing pairs contend first of all for the random stream. The editor does not expose the mechanism today. The documentation can name it, the editor can flag it, and the platform can take the drag out of the schedule; the rates above motivate those changes, and their effects remain to be measured. Every environment that saves a coordinate its language treats as cosmetic has a scheduler of this kind, and the same finite exploration applies to each. Three problems remain open: (i) an extractor proved adequate for the VM's activation semantics, (ii) requirement predicates validated against graders' judgments, and (iii) an editor indication whose usefulness is measured with learners.

\bibliographystyle{IEEEtran}
\bibliography{refs}

\end{document}